\documentclass[a4paper,11pt]{article}
\usepackage[utf8]{inputenc}
\usepackage[english]{babel}
\usepackage[top=3cm, bottom=3.5cm, right=3cm, left=3cm]{geometry}

\usepackage{amsthm}
\newtheorem{theorem}{Theorem}
\newtheorem{prop}{Proposition}
\newtheorem{lem}{Lemma}

\theoremstyle{definition}
\newtheorem{definition}{Definition}
\theoremstyle{remark}
\newtheorem*{remark}{Remark}
\usepackage{graphicx}

\sloppy

\usepackage{amssymb, amsmath, amsfonts}
\usepackage{url}
\usepackage{wasysym}
\usepackage{tikz}
\usetikzlibrary{positioning} % pour permettre la notation right= of (vs right of=) et donc différencier l'espacement verticale et horizontal dans node distance
\usetikzlibrary{shapes} %pour faire des ellispses
\usetikzlibrary{arrows,decorations.markings} %pour pouvoir mettre des flèches avant l'extrémité
\tikzstyle directed=[postaction={decorate,decoration={markings,mark=at position .65 with {\arrow[scale=1.2]{>}}}}]

\usepackage{color} %mettre de la couleur !
\usepackage{ulem} %pour souligner et barrer
\normalem %pour empêcher ulem de transformer \emph en souligner

\newcommand{\xor}{\oplus}
\newcommand{\Xor}{\bigoplus}
\newcommand{\erase}[1]{}

\newcommand{\B}{\mathbb{B}}
\newcommand{\C}{\mathcal{C}}
\newcommand{\F}{\mathcal{F}}

\newcommand{\N}{\mathcal{N}}

\newcommand{\1}{\mathbf{1}}

\newcommand{\ie}{\textit{i.e. }}
\newcommand{\eq}{\Leftrightarrow}
\renewcommand{\neg}[1]{\overline{#1}}
\renewcommand{\leq}{\leqslant}
\renewcommand{\geq}{\geqslant}

\begin{document}
  \title{On the flora of asynchronous locally non-monotonic Boolean automata networks} 
  \author{
  Aurore Alcolei \thanks{\'Ecole Normale Sup\'erieure, Lyon, France; 
  \texttt{aurore.alcolei@ens-lyon.fr}} 
  \and K\'evin Perrot 
  \thanks{Aix-Marseille Universit\'e, CNRS, LIF UMR7279, Marseille, 
France;  \texttt{kevin.perrot@lif.univ-mrs.fr}} 
  \and Sylvain Sen\'e 
  \thanks{Aix-Marseille Universit\'e, CNRS, LIF UMR7279, Marseille, 
France; \texttt{sylvain.sene@lif.univ-mrs.fr}}
}
\date{}

\maketitle

\begin{abstract} 
Boolean automata  networks (BANs)  are   a  well established model  for 
regulation systems such as neural networks or  gene regulation  networks. 
Studies on the  asynchronous dynamics of BANs have mainly focused on 
monotonic networks, where fundamental questions on the links relating their 
static and dynamical properties have been raised and addressed. 
This paper explores analogous questions on non-monotonic networks, $\xor$-BANs 
(xor-BANs), that are BANs where all the local transition functions are 
$\xor$-functions.
Using algorithmic tools, we give a general characterisation of the asynchronous 
transition graphs of most of the strongly connected $\xor$-BANs and cactus 
$\xor$-BANs. As an illustration of these results, we provide a complete 
description of the asynchronous dynamics of two particular structures of 
$\xor$-BANs, namely $\xor$-Flowers and $\xor$-Cycle Chains. 
This work also draws new behavioural equivalences between BANs, using rewriting 
rules on their graph description.
\end{abstract}

% % % % % % % % % % % % % % % % % % % % % % % % % % % % % %

\section{Introduction}\label{intro}

Boolean automata networks (BANs) are discrete interaction networks that are now 
well established models for biological regulation systems such as neural 
networks~\cite{h82,h84} or gene regulation networks~\cite{k69,t73}.
To this extent, locally monotonic BANs have been widely studied, both on the 
applied side~\cite{g08,mta99} and on the theoretical 
side~\cite{jlv10,mrs13,noual_th,nrs12b,rmct03}. 
However, recent works have brought new interests in local non 
monotony~\cite{nrs13}. 

On the biological side, it has been shown that, sometimes, gene regulations 
imply more complex behaviour than what is usually assumed: this is for example 
the case when one also takes in account the effect of their 
byproducts~\cite{tt95}. 
In this case, local non monotony may be required for modelling, in particular 
because this allows to express sensitivity to the environment. 

On the theoretical side, it has been noticed~\cite{n11,nrs12b} that non local 
monotony is often involved when it comes to singular behaviours in BANs. 
For example it has been shown that the smallest network that is not robust to 
the addition of synchronism (\ie allowing some automata to update 
simultaneously) is a locally non-monotonic BAN~\cite{n11,nrs12b}.

In the lines of~\cite{nrs13}, the present study is a first step towards a 
better understanding of locally non-monotonic BANs. 
It focuses on $\xor$-BANs, that is, BANs in which the state of an automaton $i$ 
is updated by xoring the state value (or the negated state value) of the 
incoming neighbours of $i$. In other words, in these BANs, every local 
transition function is of the form $f_i = \Xor_{j\in N^+(i)} \sigma(x_j)$ where 
$\sigma\in\{id, neg\}$ and $N^+(i)$ denotes the set of incoming neighbours of 
$i$~\cite{nrs12b}. 

Following a constructive approach, we first looked at some particular BAN 
structures that combine cycles, such as the double-cycle 
graphs~\cite{dns12,mrrs15}, the flower-graphs~\cite{dr12} and the cycle chains. 
All these BANs belong to the family of cactus BANs since any two simple 
cycles in their structure have at most one automaton in common.
Actually, we realised that most of the specific results we got for each 
of these BANs could in fact be generalised to a wide set of $\xor$-BANs: the 
strongly connected $\xor$-BANs with an induced double cycle of size greater than 
$3$.
%A precise definition of this type of BANs will be given 
%in Section \ref{section_def}, but, as a starter, let us notice that this family of 
%BANs contains in particular every strongly connected cactus BANs.

A precise specification of these BANs is given in Section~\ref{section_def}. 
This section also introduces all the definitions and notations that will be used 
in the sequel. 
Section~\ref{section_res} is dedicated to the presentation and proofs of the 
general results obtained about the asynchronous dynamics of strongly connected 
$\xor$-BANs with an induced double cycle of size greater than $3$. Similarly to 
what is done in~\cite{mrrs15}, these results are based on an algorithmic 
description of the asynchronous transition graph of these BANs. 
We conclude this paper in Section~\ref{section_ex} with a full characterisation 
of two types of $\xor$-BANs, the $\xor$-flower BANs and the $\xor$-cycle chain 
BANs, which illustrates the results of Section~\ref{section_res} and provides 
new behavioural equivalences bisimulation results specific to $\xor$-BANs. 
These last results are of interest since they provide new perspectives for BAN 
classification through the use of rewrites of their interaction graphs.

\erase{
% To ease the reading, proofs are often omitted or shortened in the 
% following. However their full 
% version are put in the appendix and left to the discretion of the readers.
}
% % % % % % % % % % % % % % % % % % % % % % % % % % % % % %

\section{Definitions and notations} \label{section_def}

\subsection{Static definition of a BAN}

A BAN is defined as a set of Boolean automata that interact with each other. 
The \emph{size} of a network corresponds to the number of automata in it. For a 
network $\N$ of size $n$ we denote  $V = \{1,\ldots,n\}$ the corresponding set 
of automata. 

A \emph{Boolean automaton} $i$ is an automaton whose state has a Boolean value 
$x_i \in \B = \{0,1\}$.
The Boolean vector $x=(x_i)_{i=1}^n$ that gathers together the states of all 
automata in the  network is called a \emph{configuration} of $\N$.
In the following, we will sometimes denote by $x[i,k]$ the subvector that 
records the states of the automata from $i$ to $k$, for $i<k$.
We will shorten by $\neg{x}^i$ the configuration $x$ where the state of the 
$i^{th}$ automaton is negated, and similarly, for any subset $I$ of $V$, 
$\neg{x}^I$ will denote the configuration $x$ where the states of the automata 
in $I$ are negated.

The state of an automaton can be \emph{updated} according to its \emph{local 
transition function} $f_i: \B^n \to \B$. This local function characterises how 
the automaton reacts in a given configuration: just after being updated, the 
state of $i$ has value $f_i(x)$ where $x$ is the configuration of the network 
before the update.
We say that $i$ is \emph{stable} in $x$ if $f_i(x) = x_i$. It is \emph{unstable} 
otherwise. 
Hence a network $\N$ is completely described by its set of local transition 
functions $\N = \{f_i\}_{i=1}^n$.

An automaton $i$ is said to be an \emph{influencer} of an automaton $j$ if there 
exists a configuration $x$ such that $f_j(x) \neq f_j(\neg{x}^i)$. In this case $j$ 
is said to be \emph{influenced by} $i$. We denote by $I_j$ the set of influencers of 
$j$.% \tocheck{and $J_i$ the set of automata influenced by $i$}.

In a BAN, a \emph{path} $\pi = i_0 i_1\ldots i_k$ of \emph{length} $k$ is a sequence 
of distinct automata such that for all $1\leq j\leq k$, $i_{j-1}\in I_j$. 
A BAN is \emph{strongly connected} if there is a path between every two 
automata. 
A \emph{nude path} is a particular path such that for all $1\leq j\leq k$, 
$i_{j-1}$ is the unique influencer of $i_j$ ($I_j=\{i_{j-1}\}$), \ie $f_j(x) = 
x_{j-1}$ or $f_j(x) =\neg{x_{j-1}}$. We define the \emph{sign} of a nude path as 
the parity of the number of local functions of the form $f_i(x) = \neg{x_{i-1}}$ 
that compose it,\ie $sign(\pi) = \left( \sum_{j=1}^n \1_{f_j(x) = 
\neg{x_{j-1}}}\right)\mod(2)$. 
A nude path is \emph{maximal} if any extension of it is not a nude path. 
We will denote by $\pi_i$ the maximal nude path that ends in automaton $i$. Paths and 
nude paths get their name from the graphical representation that is often associated 
to BAN as we will see next.

To get a sense of what a network looks like, it is common to give a graphical 
representation of it. To every local functions $f_i$, one can associate a Boolean 
formula $\F_i$ over the variables $x_i$. The literal associated to the $k^{th}$ 
occurrence of the variable $x_i$ is denoted by $\sigma_k(x_i)$ where $\sigma_k$ is 
the sign of the literal. 
Then the \emph{interaction graph} of $\N$ according to these formulas is the 
signed directed graph $G = (V,A)$, where $V = \{1,\ldots,n\}$ is the set of 
nodes of $G$ with one entry points per literal in $\F_i$, and $A$ is the set of 
arcs defined by $(i,j,\sigma_k) \in A$ if the $k^{th}$ occurrence of the 
variable $x_i$ in $\F_j$ has sign $\sigma_k$ (see Figure~\ref{schema_ban} 
\emph{(a)}).

\erase{
}

%(kevin) version allégée du subsectione ci-dessus :
As we focus on $\xor$-BANs, all formula $\F_i$ involving more than one 
automaton 
will be written in Reed-Muller canonical form, that is $\F_i = \Xor_{j\in I_i} 
\sigma_j(x_j)$. 
The {\em type} of a BAN will refer to the underlying structure of its 
interaction graph (modulo the sign of the literals and a renaming of the 
automata). A type of BANs can be described by a family of graphs, and we will 
say that two BANs are of the same {\em type} if their interaction graphs 
are isomorphic (we ignore the labels).

The simplest interaction structure that allows for complex behaviour is the 
cycle 
structure~\cite{r86}. A \emph{Boolean automata cycle} (BAC) $\C$ of size $n$ is 
a BAN 
defined as a set of local functions $\{f_i\}_{i=1}^{n}$ such that $f_i(x) =  
x_{((i-1)\mod(n))} $ or $f_i(x) = \neg{x_{((i-1)\mod(n))}}$ for all 
$i\in\{1,\ldots,n\}$. Abusing notation we will often express $f_i$ via its 
formula 
representation $\F_i = \sigma_i(x_{pred(i)})$ where $pred(i) = (i-1 \mod(n))$ 
is the only influencer of $i$ in $\C$ and $\sigma_i$ is its sign (either the 
identity or the negation function).

In the following, the majority of the networks or patterns we discuss are made 
of 
cycles that intersect each other. If an automaton $i$ is the intersection of 
$\ell$ 
distinct cycles, then its local transition function will be $f_i(x) = 
\Xor_{j=1}^\ell 
\sigma_j(pred_{j}(i))$ where $pred_{j}(i)$ represents the predecessor of $i$ in 
each 
of the incident cycles.

If a BAN is described in terms of intersections of $m$ simple cycles, $\C_1,
\ldots,\C_m$, we will often represent its size by a vector of natural numbers $n 
= 
(n_1,\ldots,n_m)$, where $n_k$ is the size of the $k^{th}$ cycle. We will also 
use 
this vector representation to describe the configurations of the BAN: $x = 
(x^1, 
\ldots, x^m) \in \B^{n_1}\times\ldots\times\B^{n_m}$ will represent the 
configuration 
where each cycle $\C_k$ is in configuration $x^k\in \B^{n_k}$. By extension 
$x^k_j$ 
will denote the state of automaton $i^k_j$ which is the $j^{th}$ automaton of 
cycle 
$\C_k$. 

As one can expect, a {\em strongly connected} $\xor$-BAN is a $\xor$-BAN whose
interaction graph is strongly connected. Hence the type of these BANs can always 
be 
described as a set of simple cycles and intersection automata. 
Strongly connected \emph{cactus} BANs are special strongly connected BANs where 
any two simple cycles intersect each other at most once~\cite{emc88}. The 
simplest example of BANs of this form are the \emph{$\xor$-Boolean automata 
double-cycles} ($\xor$-BADCs). These $\xor$-BANs are described by two cycles 
$\C_1$, $\C_2$ that intersect at a unique automaton $o = i^1_1 = i^2_1$. The 
$\xor$-BAN depicted in Figure~\ref{schema_ban} \emph{(a)} is in fact a 
$\xor$-BADC of size $(2,1) = 2 + 1 - 1 = 2$. 

\subsection{Asynchronous dynamics of a BAN}

As previously mentioned, the configuration of a network may change in time along 
with 
the local updates that are happening. A local update is formally described by a 
subset $W$ of $V$ which contains the automata to be updated at a time. We say 
that 
$W$ is \emph{asynchronous} if it has cardinality 1, that is, $W = \{i\}$ for 
some $i 
\in V$.

An update $W$ makes the system move from a configuration $x$ to a configuration 
$x'$ 
where $x'_i = f_i(x)$ if $i\in W$, and $x'_i = x_i$  otherwise. This defines a 
global 
function $F_W:\B^n\to\B^n$ over the set of configurations.

A network evolves according to a particular \emph{mode} $M\subseteq 
\mathcal{P}(V)$ 
if all its moves are due to updates from $M$. The \emph{asynchronous mode} of a 
BAN 
of size $n$ is then defined by the set $A = \{\{i\}\}_{i=1}^n$ of 
asynchronous updates, it is non-deterministic. Note that our definition of 
update mode is not fully general \cite{noual_th} but sufficient for the scope 
of this paper.

We say that a configuration $x'$ is \emph{reachable from} a configuration $x$ 
(in a 
mode $M$) if there exists a finite sequence of updates $(W_t)_{t=1}^s$ (in $M$) 
such 
that $F_{W_1}\circ\ldots\circ F_{W_s}(x) = x'$. Then, a configuration is 
\emph{unreachable} (in $M$) if it cannot be reached from any other configuration 
but 
itself (in $M$). Finally a \emph{fixed point} (of $M$) is a configuration $x$ 
such 
that $F_W(x) = x$ for every update $W$ (in $M$).

The study of the dynamics of a network under a particular update mode aims at 
making 
predictions, \ie given an initial configuration $x$, we want to tell what are 
the 
possible sets of configurations in which the network can end asymptotically. 
These 
sets are called \emph{attractors} of the network and the set of configurations 
from 
which they can be reached are their \emph{attraction basins}. Notice that a 
fixed 
point is an attractor of size 1.

The dynamics of a network $\N$ according to an update mode $M$ can be modelled 
by a 
labelled directed graph $G^M_\N = (\B^n,\bigcup_{W\in M} F_W)$, called the 
\emph{M-transition graph} of $\N$, such that:
\begin{itemize}
\item the set of vertices $\B^n$ corresponds to the $2^n$ configurations of 
$\N$.
\item the arcs are defined by the transition graph of the functions $F_W$ for 
all 
	$W\in M$, that is, $x \overset{W}{\longrightarrow} x'$ is an arc of $G$ if 
and only 
	if $W\in M$ and $F_W(x) = x'$.
\end{itemize}
The transition graph $G^A_\N$ associated to the asynchronous update mode is 
called 
the \emph{asynchronous transition graph} of $G$, shorten ATG. 
Figure~\ref{schema_ban} 
\emph{(b)} shows the ATG of the $\xor$-BADC depicted on the left.

\begin{figure}[t!]
	\centerline{
		\scalebox{1}{\input{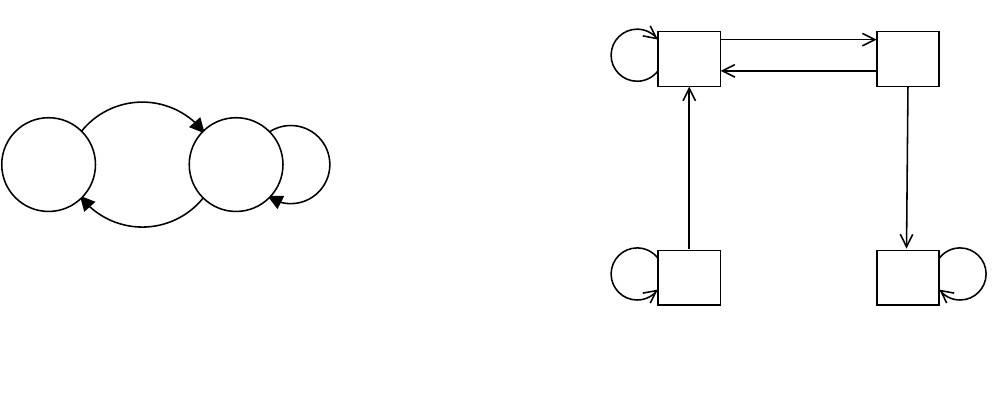_t}}
	}
	\caption{\emph{(a)} The interaction graph of BAN $\{f_1(x) = x_2, f_2(x) = 	
x_1 
		\xor \neg{x_2}\}$ and \emph{(b)} its asynchronous transition graph.}
	\label{schema_ban}
\end{figure}

In terms of transition graphs, an \emph{attractor} of $\N$ for the mode $M$ 
corresponds to a \emph{terminal} strongly connected component of $G^M_\N$, that 
is, a 
strongly connected component that does not admit any outgoing arcs. The 
attraction 
basin of an attractor corresponds to the set of configurations in $G^M_\N$ that 
are 
connected to this component. 

In a mode $M$, the configurations that do not pertain to an attractor are 
called 
\emph{transient} configurations. These configurations can be \emph{reversible} 
or \emph{irreversible} depending on whether it is possible to reach them again 
once they have been passed.
A particular type of irreversible configurations are the \emph{unreachable} 
configurations that are the configurations that do not have any incoming arcs 
but self-loops in $G^M_\N$.

Because of the correspondence between transition graphs and dynamics, most of 
the results presented in the following are expressed in terms of walks and 
descriptions of the asynchronous transition graphs of the networks we study.

\subsection{Behavioural isomorphism Bisimulation equivalence relation}

We conclude this section with a quick reminder on \emph{behavioural 
isomorphism} which is an equivalence relation over the set of BANs that 
expresses the fact that two networks ``behaves the same way'' (up to a renaming 
of their automata and/or of their configurations). More precisely, the 
equivalence of $\N$ and $\N'$ means that, for any update mode $M$, the 
transition graphs $G^M_\N$ and $G^M_{\N'}$ are isomorphic.

\begin{definition}\label{def_bisim}
	Two BANs $\N$ and $\N'$ are (behaviourally) isomorphic if there exist two 
bijections 	$\varphi: V \to V'$ over the set of automata and $\phi: \B^n \to 
\B^n$ over the set 	of configurations such that for any update $W\subseteq V$ 
in 
$\N$, the 	corresponding update $\varphi(W)$ acts the same way in $\N'$, that 
is, for all configurations $x$, $\phi(F_W(x)) = F'_{\varphi(W)}(\phi(x))$.
\end{definition}

This definition of isomorphism between BANs has been first
introduced in~\cite{noual_th} under the name of bisimulation. 
We recall here some general results about it.

\begin{theorem}[\cite{noual_th}]\label{thm_bisim_neg}
	Let $\N=\{f_i\}_{i=1}^n$ be a BAN and $\N^\bot = \{f^\bot_i\}_{i=1}^n$ be 
its 
	\emph{dual} network defined as $f^\bot_i(x) = \neg{f_i(x)}$ then $\N$ and 
	$\N^\bot$ are isomorphic. 
\end{theorem}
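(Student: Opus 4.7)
The plan is to realise the behavioural isomorphism explicitly. Since $\N$ and $\N^\bot$ share the same set of automata, I would simply take $\varphi$ to be the identity on $V$ and concentrate on producing a configuration bijection $\phi:\B^n\to\B^n$. A natural ansatz is the affine map $\phi(x)=x\xor c$ for some constant $c\in\B^n$ still to be chosen: it is automatically an involution, hence a bijection, whatever the value of $c$.

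Injecting this ansatz into the commutation condition $\phi(F_W(x))=F^\bot_{\varphi(W)}(\phi(x))$ and comparing both sides coordinate by coordinate, I would note that for $i\notin W$ both sides equal $x_i\xor c_i$ and the equation holds trivially. For $i\in W$, unfolding $F_W$, $F^\bot_W$ and the definition $f^\bot_i=\neg{f_i}$ transforms the commutation requirement into the single functional equation
\[
 f_i(x\xor c) \;=\; 1\xor c_i\xor f_i(x),
\]
which must be valid for every $x\in\B^n$ and every $i\in V$.

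The heart of the proof is to exhibit a constant $c$ satisfying this family of equations uniformly in $x$. Here the $\xor$-structure of the local functions pays off: since $f_i(x)=\Xor_{j\in I_i}\sigma_j(x_j)$ and both choices $\sigma_j=id$ and $\sigma_j=neg$ satisfy $\sigma_j(x_j\xor c_j)=\sigma_j(x_j)\xor c_j$, the left-hand side simplifies to $f_i(x)\xor\Xor_{j\in I_i}c_j$, so the dependence on $x$ vanishes. The constraints collapse to the $\mathbb{F}_2$-linear system
\[
 c_i \;\xor\; \Xor_{j\in I_i} c_j \;=\; 1 \qquad \text{for every } i\in V,
\]
and the main technical task is to establish its solvability. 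I would handle this by analysing the matrix $\mathrm{Id}+A$, where $A$ denotes the adjacency matrix of the influence graph, and invoking the structural hypotheses on $\N$ to guarantee the existence of a solution; this is the step I expect to require the most care.

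Once such a vector $c$ has been produced, the pair $(\mathrm{id}_V,\phi)$ satisfies Definition~\ref{def_bisim}: $\phi$ is a bijection by construction, and the coordinate-wise computation above extends verbatim from singleton updates to arbitrary $W\subseteq V$, giving $\phi\circ F_W=F^\bot_W\circ\phi$ for every $W$. This yields the behavioural isomorphism between $\N$ and $\N^\bot$.
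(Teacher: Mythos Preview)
The paper does not supply its own proof of this statement (it is quoted from~\cite{noual_th}), so your argument has to be assessed on its own terms, and there are two genuine gaps.

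First, Theorem~\ref{thm_bisim_neg} is stated for an arbitrary BAN, but your passage from the functional equation $f_i(x\xor c)=1\xor c_i\xor f_i(x)$ to the linear constraint $c_i\xor\Xor_{j\in I_i}c_j=1$ requires each $f_i$ to have the form $\Xor_{j\in I_i}\sigma_j(x_j)$. For a general Boolean $f_i$ the identity $f_i(x\xor c)=f_i(x)\xor\Xor_{j\in I_i}c_j$ simply fails, so the reduction to linear algebra does not go through outside the $\xor$ class.

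Second, even inside the $\xor$ class the system $(\mathrm{Id}+A)c=\1$ over $\mathbb{F}_2$ need not have a solution, and the statement carries no ``structural hypotheses on~$\N$'' for you to invoke. A positive Boolean automata cycle of odd length $n$ is already a counterexample: each equation reads $c_i\xor c_{i-1}=1$, and summing all $n$ of them gives $0=n\equiv 1$, a contradiction. And indeed this positive cycle has two fixed points ($0^n$ and $1^n$) while its dual as written here is a negative odd cycle with none, so $\N$ and $\N^\bot$ are not behaviourally isomorphic at all. Your unsolvable system is therefore detecting a real obstruction, not a weakness of the ansatz. The upshot is that the formula for $f^\bot$ in the statement is almost certainly intended to be $f^\bot_i(x)=\neg{f_i(\neg{x})}$; with that reading your affine ansatz succeeds immediately with $c=\1$, i.e.\ $\phi(x)=\neg{x}$, for every BAN, and no linear system is needed.
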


\begin{theorem}[\cite{noual_th}]\label{thm_bisim_can}
	Let $\N=\{f_i\}_{i=1}^n$ be a BAN and $\N^+ = \{f^+_i\}_{i=1}^n$ be its 
	\emph{canonical} network defined as \emph{(i)} $f^+_i(x) = x_j$ if $f_i(x) = 
x_j$ or 
	$\neg{x_j}$, and \emph{(ii)} $f^+_i(x) = f_i(\neg{x}^I)$ otherwise, where 
$I 
	= \{i\in V\ |\ sign(\pi_i) = 1\}$ is the set of automata whose maximal 
incoming 
	nude path has negative sign. Then $\N$ and $\N^+$ are isomorphic.
\end{theorem}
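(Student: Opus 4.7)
The plan is to exhibit an explicit pair of bijections witnessing the bisimulation. Take $\varphi$ to be the identity of $V$ (no relabelling of automata) and define the involution $\phi\colon \B^n \to \B^n$ by $\phi(x) = \neg{x}^I$, where $I = \{i \in V \mid sign(\pi_i) = 1\}$ is the same set used to define $\N^+$. The key algebraic fact, used throughout, is that $\phi\circ\phi$ is the identity: flipping the bits of $I$ twice cancels out.

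Since $F_W$ acts component-wise---position $k \in W$ is set to $f_k(x)$ and positions $k \notin W$ are left untouched---the identity $\phi(F_W(x))_k = F^+_W(\phi(x))_k$ can be verified one coordinate at a time. Positions $k \notin W$ yield $\phi(x)_k$ on both sides immediately, so the substantive check is at positions $k \in W$, and it splits along the two clauses in the definition of $f^+_k$.

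For clause (ii), where $f^+_k(x) = f_k(\neg{x}^I)$, the hypothesis that $f_k$ is not of the form $\sigma(x_j)$ means that $k$ has multiple (or no) influencers; hence $\pi_k$ reduces to the single-vertex path at $k$, giving $sign(\pi_k) = 0$ and $k \notin I$. Then $F^+_W(\phi(x))_k = f_k(\neg{\phi(x)}^I) = f_k(x)$ by involutivity, while $\phi(F_W(x))_k = F_W(x)_k = f_k(x)$ since $k \notin I$. Both sides coincide.

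For clause (i), where $f_k(x) = \sigma(x_j)$ with $j$ the unique influencer of $k$, I first establish the recursion $sign(\pi_k) = sign(\pi_j) \xor s_{jk}$, where $s_{jk} \in \B$ encodes the sign of $\sigma$ ($0$ for identity, $1$ for negation). This follows from the observation that $\pi_k$ is exactly $\pi_j$ extended by the arc $j \to k$ (by maximality on both sides). The recursion couples membership in $I$: $k$ and $j$ lie on the same side of $I$ iff the arc $j \to k$ is positive. Unrolling, $\phi(F_W(x))_k$ equals $\sigma(x_j)$ possibly negated by $k \in I$, whereas $F^+_W(\phi(x))_k = \phi(x)_j$ equals $x_j$ possibly negated by $j \in I$; a four-case table based on the parity relation shows the two quantities always agree. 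The main obstacle is precisely this sign-bookkeeping, but once the recursion on $sign(\pi_\cdot)$ is in place, the remainder is mechanical.
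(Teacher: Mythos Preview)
The paper does not supply a proof of this statement; it is quoted from~\cite{noual_th}. Your approach---take $\varphi=\mathrm{id}$, $\phi(x)=\neg{x}^I$, and verify $\phi_i(f_i(x))=f^+_{i}(\phi(x))$ coordinate by coordinate---is exactly the template the paper itself uses for its isomorphism arguments (compare Lemma~\ref{lem_bisim_cond} and the proof of Lemma~\ref{lem_equivalence}), and for the networks the paper actually treats (BADCs, flowers, chains, where every maximal nude path begins at an intersection node) it goes through cleanly.

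There is, however, a genuine gap in your clause~(i) step. The identity ``$\pi_k$ is $\pi_j$ extended by the arc $j\to k$'' presumes that $k$ does not already occur in $\pi_j$. This fails precisely when the unique-influencer chain from $k$ closes up on itself, i.e.\ when $k$ sits on a directed cycle every vertex of which has a single influencer. On such a ``nude cycle'' of length $n$, each $\pi_i$ has length $n-1$ and traverses every edge of the cycle \emph{except} the one leaving $i$, so your recursion $sign(\pi_k)=sign(\pi_j)\xor s_{jk}$ is false in general, and one checks directly that $\phi=\neg{\,\cdot\,}^I$ then fails to intertwine the dynamics (try a positive $3$-cycle with two negated arcs: your $\phi$ does not send it to the all-positive cycle). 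Worse, if such a nude cycle is negative the theorem as stated is actually false: $\N^+$ is the all-positive cycle, which has two fixed points, while a negative cycle has none. So the statement needs a hypothesis excluding negative nude cycles; once that is made explicit, your argument is complete for every automaton not lying on a nude cycle, and the remaining positive-nude-cycle case requires a separate (short) construction with a different pattern of local negations along the cycle.
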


Theorem~\ref{thm_bisim_neg} is of importance because it tells us that all the 
results 
stated in the sequel will also hold for $\eq$-BANs, which are the dual BANs of 
the 
$\xor$-BANs since all their local functions are of the form $f_i(x) = 
\underset{j\in I_i}{\eq} \sigma(x_j)$. 
On the other side, Theorem~\ref{thm_bisim_can} is very useful when studying 
particular types of networks because it greatly reduces the number of cases to 
study. Indeed, it says that one only needs to focus on networks with positive 
nude paths to characterise the whole set of possible transition graphs for a 
given type of networks. For example, it states that there are only three 
different cases of $\xor$-BADCs to study: the \emph{positive} ones, the 
\emph{negative} ones and the \emph{mixed} ones, that respectively correspond to 
the 
case where $f_o(x) = x_1^1 \xor x_1^2$,  $f_o(x) = \neg{x_1^1} \xor \neg{x_1^2}$ 
and  
$f_o(x) = \neg{x_1^1} \xor x_1^2$. 
There is actually only one class of $\xor$-BADCs since: \emph{(i)} the equality 
$x_1^1 \xor x_1^2 = \neg{x_1^1} \xor \neg{x_1^2}$ implies that positive and 
negative $\xor$-BADCs are trivially isomorphic; \emph{(ii)} a 
positive $\xor$-BADC  is isomorphic to a mixed $\xor$-BADC of same structure by 
taking $\phi(x) = \neg{x}^V$.

To prove that two networks are isomorphic we will often use a stronger 
condition 
than the one given in Definition~\ref{def_bisim}.

\begin{lem} \label{lem_bisim_cond}
	Two BANs $\N = \{f_i\}_{i=1}^n$, $\N' = \{f'_i\}_{i=1}^n$ are 
isomorphic if and only if there exists a bijection 
$\varphi: \{1,\ldots,n\} \to \{1,\ldots,n\}$ and a set 	$\{\phi_i: \B \to 
\B\}_{i=1}^n$ of (non constant) Boolean functions such that for all automata 
$i$, $\phi_i \in \{id, neg\}$, and for all configurations $x\in \B^n$, 
$\phi_i(f_i(x)) = f'_{\varphi(i)}(\phi(x))$ where $\phi(x)$ is defined 
componentwise by $\phi(x)_i = \phi_{\varphi^{-1}(i)}(x_{\varphi^{-1}(i)})$.
\end{lem}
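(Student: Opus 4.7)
The plan is to prove the two directions separately. The ``if'' direction is a direct verification, whereas the ``only if'' direction requires showing that every bisimulation must in fact be componentwise.

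For the ``if'' direction, I would start from $\varphi$ and componentwise functions $\phi_i \in \{id, neg\}$ satisfying the local equation $\phi_i(f_i(x)) = f'_{\varphi(i)}(\phi(x))$, and verify the conditions of Definition~\ref{def_bisim}. The induced map $\phi : \B^n \to \B^n$ is a bijection since it is a coordinate permutation composed with bijections on each coordinate. Then, fixing an update $W \subseteq V$ and a configuration $x$, I would check the equality $\phi(F_W(x)) = F'_{\varphi(W)}(\phi(x))$ coordinate by coordinate. The case analysis is: if $i \in W$, then $\varphi(i) \in \varphi(W)$ and the $\varphi(i)$-coordinate equation is exactly the assumed local identity; if $i \notin W$, then neither side alters the relevant coordinate and both reduce to $\phi_i(x_i)$.

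For the ``only if'' direction, I would take a bisimulation $(\varphi, \phi)$ in the sense of Definition~\ref{def_bisim} and extract componentwise functions $\phi_i$. The central step is to show that $\phi$ is coordinate-separable, meaning $\phi(x)_{\varphi(i)}$ depends only on $x_i$. Applying the commutation condition to a singleton update $W = \{i\}$ gives $\phi(F_{\{i\}}(x)) = F'_{\{\varphi(i)\}}(\phi(x))$, which forces $\phi(F_{\{i\}}(x))$ and $\phi(x)$ to differ at most at coordinate $\varphi(i)$. Combined with the analogous statement for the inverse bisimulation $(\varphi^{-1}, \phi^{-1})$, this sends Hamming-one neighbours in $\B^n$ to Hamming-one neighbours in $\B^n$, with the flipped coordinate translated via $\varphi$. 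From coordinate-separability, I would define $\phi_i : \B \to \B$ by setting $\phi_i(b)$ to be the value of $\phi(x)_{\varphi(i)}$ at any $x$ with $x_i = b$; bijectivity of $\phi$ forces each $\phi_i$ to be non-constant, so $\phi_i \in \{id, neg\}$. The local equation is then obtained by projecting $\phi(F_{\{i\}}(x)) = F'_{\{\varphi(i)\}}(\phi(x))$ onto coordinate $\varphi(i)$.

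The main obstacle I expect is the coordinate-separability step in the ``only if'' direction. The difficulty arises when $f_i(x) = x_i$ at a configuration $x$, i.e., automaton $i$ is stable in $x$: the singleton update $F_{\{i\}}$ then fixes $x$ and provides no direct information relating $\phi(x)$ to $\phi(x^i)$. To handle this, one must either compose singleton updates on several automata (using $W$ of size $\geq 2$ from Definition~\ref{def_bisim}) to transport information between $\phi(x)$ and $\phi(x^i)$, or use the global bijectivity of $\phi$ together with the coordinate-separability already established at configurations where $i$ is unstable to pin down the structure indirectly.
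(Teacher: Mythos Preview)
Your plan follows the paper's proof essentially line for line: the ``if'' direction is a direct coordinate check, and for ``only if'' one sets $\phi_i$ to be the $\varphi(i)$-th coordinate of $\phi$ and argues it is independent of $x_j$ for $j\neq i$ by establishing $\phi(\overline{x}^{\,j})=\overline{\phi(x)}^{\,\varphi(j)}$ for every $j$ and every $x$.

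The obstacle you flag --- the case where $j$ is stable in $x$, so that $F_{\{j\}}(x)=x$ --- is real, and the paper does \emph{not} address it. The paper writes the two case-splits for $\phi(F_{\{j\}}(x))$ and $F'_{\varphi(j)}(\phi(x))$, observes that bijectivity of $\phi$ gives $\phi(x)\neq\phi(\overline{x}^{\,j})$, and from this directly concludes $\phi(\overline{x}^{\,j})=\overline{\phi(x)}^{\,\varphi(j)}$. That inference is only valid when $j$ is unstable in $x$ (or, by symmetry, in $\overline{x}^{\,j}$); when $j$ is stable in both --- as happens for instance whenever $f_j(y)=y_j$ --- the singleton-update equation is vacuous and nothing follows. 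Concretely, for the trivial network $f_i(y)=y_i$ on two automata, \emph{every} bijection $\phi$ satisfies Definition~\ref{def_bisim} with $\varphi=\mathrm{id}$, and most of these are not componentwise with respect to that $\varphi$. So you are being more careful than the paper here: the workarounds you sketch (using updates $W$ of size $\geq 2$, or a global bijectivity argument leveraging the configurations where the claim is already established) are genuinely needed, not merely defensive, if one wants to close the argument for arbitrary BANs.
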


\begin{proof}
  The proof of the right implication is straightforward since the equality  
$\phi_i(f_i(x)) = f'_{\varphi(i)}(\phi(x))$ between the local functions 
induces the equality $\phi(F_W(x)) = F'_{\varphi(W)}(\phi(x))$  between the 
global functions for any update $W$.

To prove the reverse implication we need to show that every bijection $\phi$ 
can be expressed locally: 
Suppose $\N$ and $\N'$ are isomorphic and let $\varphi$ and $\phi$ match 
Definition~\ref{def_bisim}.
For all $i\in\{1,\ldots,n\}$, let $\phi_i:\B^n \to B$ be defined by $\phi_i(x) 
= \phi(x)_{\varphi(i)}$. We want to prove that $\phi_i$ does not depend on any 
other variable than $x_i$ (hence it can be rewritten as a Boolean function from 
$\B \to \B$).

Let $j\in\{1,\ldots,n\}$ and let $x$ be any configuration, then by definition,
$$\phi(F_i(x)) = \left\{\begin{array}{l} \phi(x) \text{ if $j$ is stable in 
$x$} \\ \phi(\neg{x}^j) \text{ if $j$ is unstable in $x$}  \end{array}\right.$$
and
$$F'_{\varphi(j)}(\phi(x)) = \left\{\begin{array}{l} \phi(x) \text{ if 
$\varphi(j) $ is stable in $\phi(x)$} \\ \neg{\phi(x)}^{\varphi(j)} \text{ if
$\varphi(j) $ is unstable in $\phi(x)$}  \end{array}\right.$$
The function $\phi$ is a bijection so $\phi(x) \neq \phi(\neg{x}^j)$. In the 
same time, $\phi(F_i(x)) = F'_{\varphi(j)}(\phi(x))$ and so $\phi(\neg{x}^j) 
\neq \phi(x)$ implies that $\phi(\neg{x}^j) = \neg{\phi(x)}^{\varphi(j)} (\neq 
\phi(x))$. 

So if $j\neq i$ then for all $x$, $\phi_i(\neg{x}^j) = 
\phi(\neg{x}^j)_{\varphi(i)} = (\neg{\phi(x)}^{\varphi(j)})_{\varphi(i)} 
= \phi(x)_{\varphi(j)} = \phi_i(x)$ so $\phi_i$ does not depend on $x_j$ for 
all $j\in\{1,\ldots,n\}-\{i\}$, so $\phi_i$ only depends $x_i$.

Finally, $\phi_i$ is bijective since $\phi$ is a bijection. This concludes the 
proof. 
\end{proof}

We will make great use of Theorem~\ref{thm_bisim_can} and 
Lemma~\ref{lem_bisim_cond} 
in Section~\ref{section_ex}, when we will give new isomorphism results specific 
to $\xor$-BANs.

% % % % % % % % % % % % % % % % % % % % % % % % % % % % % %

\section{General results on $\xor$-BANs} \label{section_res}

This section presents the main theorem of this paper: a connexity result that 
characterises the shape of the ATG of any strongly connected $\xor$-BAN with an 
induced BADC of size greater than 3. 

\begin{theorem} \label{thm_reachability}
	In a strongly connected $\xor$-BAN with an induced BADC of size greater than 
3, any 
	configuration that is not unreachable can be reached from any configuration 
which 
	is not stable in a quadratic number of asynchronous updates.
\end{theorem}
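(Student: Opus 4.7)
By Theorem~\ref{thm_bisim_can} I may assume the network is in canonical form, so that every local function reads $f_i(x) = \Xor_{j \in I_i} x_j$ with no negated literals. Since in this setting no $f_i$ depends on $x_i$, a non-loop arc entering a configuration $x$ at coordinate $i$ exists iff $f_i(\neg x^i) = x_i$, equivalently $f_i(x) = x_i$, equivalently $i$ is stable at $x$. Hence $x$ is unreachable exactly when \emph{every} automaton is unstable at $x$, while being ``not stable'' at the source $y$ means that at least one automaton is unstable at $y$. With these local characterisations in hand, both endpoints of the walk to be constructed have very simple descriptions and the task becomes combinatorial.

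Given a non-stable source $y$ and a non-unreachable target $x$, the plan is to produce an explicit asynchronous walk from $y$ to $x$ of length $O(n^2)$, organised in three phases around the induced BADC $B = \C_1 \cup \C_2$ of size ${>}\,3$ with hub $o$. \textbf{(i)} Starting from any unstable automaton in $y$, repeatedly update the currently unstable vertex, pushing the defect along a directed path chosen via strong connectivity until it sits inside $B$; this costs $O(n)$ updates. \textbf{(ii)} Once a defect resides in $B$, use the two distinct directed cycles through $o$ to rotate states along $\C_1$ and $\C_2$: the hypothesis $|B|>3$ excludes the degenerate case where both cycles have length $2$ and is exactly what makes the two rotations non-redundant, so that any pattern on $B$ consistent with $x$ at its attachment points can be realised in $O(|B|^2)$ updates. \textbf{(iii)} Process the remaining automata in order of decreasing distance from $B$; for each one, launch a defect from $B$ along a maximum nude path reaching it and flip that automaton to its target value. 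Each such correction costs $O(n)$ updates and there are at most $n$ of them, giving the announced $O(n^2)$ bound.

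The hard part will be showing that phases (ii) and (iii) do not undo each other's work. This requires an invariance/reversibility argument for the XOR dynamics: the parity constraints governing $B$ and each external branch decouple enough that a coordinate already fixed to its target value can be cheaply restored whenever it is disturbed, and the assumption that $x$ is not unreachable is precisely what guarantees that the linear parity constraints defining the set of patterns attainable on $B$ are satisfied by the restriction of $x$ to $B$. The analysis of nude paths together with the $|B|>3$ hypothesis will be used in tandem here: long nude paths carry defects cleanly in one-automaton waves, and the double cycle offers two independent rotations at $o$ to absorb any residual parity. Once these invariance lemmas are established, the quadratic bound follows by aggregating at most $n$ local corrections, each of linear cost in propagation length.
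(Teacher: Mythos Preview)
Your three-phase plan---push an instability into $B$, use $B$ as a state generator, then set the external automata in a leaves-to-root order---is exactly the skeleton of the paper's proof, which organises the work around Lemma~\ref{lem_algo_badc} (reachability inside a $\xor$-BADC) and Lemma~\ref{lem_influence} (propagating instability along a path). But several of the supporting assertions are wrong, and the hard parts are only named, not done.

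First, the reduction via Theorem~\ref{thm_bisim_can} does not give what you claim: the canonical network only straightens nude paths, so intersection nodes may still carry negated literals (the paper itself distinguishes positive, negative and mixed canonical $\xor$-BADCs), and self-loops are not removed (see Figure~\ref{schema_ban}, where $f_2(x)=x_1\xor\neg{x_2}$). Hence your characterisation ``$x$ unreachable $\Leftrightarrow$ every automaton unstable at $x$'' fails whenever a self-loop is present, since then $f_i(\neg{x}^i)=\neg{f_i(x)}$ flips the condition at $i$. Second, and more seriously, your claim that ``$x$ not unreachable is precisely what guarantees that the patterns attainable on $B$ are satisfied by the restriction of $x$ to $B$'' is false: the restriction of the target to $B$ can be unreachable for $B$ (with its environment frozen) even when the target is reachable in $\N$. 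The paper handles this case explicitly by choosing an automaton $i$ with $f_i(\neg{x'}^i)=x'_i$, building an auxiliary path $p$ from $i$ to $B$, reaching an intermediate $\hat x$ that is reachable on $B$ and agrees with $x'$ off $p$, and only then correcting $p$. Your plan has no mechanism for this.

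Third, the interaction between phases (ii) and (iii) is the real content, and ``invariance lemmas'' do not materialise from a generic parity remark. Each excursion from $B$ along a path may, through the first one or two updates, kill the instability inside $B$, after which the generator is dead. The paper prevents this by, before each excursion, using Lemma~\ref{lem_algo_badc} to place $B$ in a configuration where a chosen neighbour $j$ of the exit vertex $i_0$ satisfies $z^0_j=\neg{f_j(\neg{z^0}^{i_1})}$ (or the variant when $i_1\notin I(i_0)$), and then does a case analysis on which of $i_0,i_1$ actually switch; the hypothesis $|B|>3$ is used precisely here to guarantee that such a $z^0$ is reachable for $B$. What you have is the right outline, but a proof still needs these three missing pieces.
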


This theorem tells us that the ATG of any strongly connected $\xor$-BAN which is 
not 
a cycle or a clique is characterised by (see Figure~\ref{xban_ATG}):
\begin{itemize}
 \item its fixed point(s) $S$ (if any).
 \item its unreachable configuration(s) $U$ (if any).
 \item a unique strongly connected component (SCC) of reversible transient 
configurations, reachable from any configuration of 
 	$U\setminus S$ and connected to any configuration of $S\setminus U$.   
\end{itemize}

The proof of Theorem~\ref{thm_reachability} is based on several algorithms 
that describe sequences of updates tomove from a given configuration to an 
other 
in the ATG of a $\xor$-BAN. 
We start this section by presented these algorithms. 
In a second time we briefly discuss the complexity of these 
algorithms to give an upper bound on the length of the minimal sequence of 
updates between two configurations. 
The end of the section is dedicated to general remarks about the set of fixed 
points and unreachable configurations of any BANs and helps precise the results 
of Theorem~\ref{thm_reachability}.

\begin{figure}[t!]
	\centerline{
		\scalebox{1}{\input{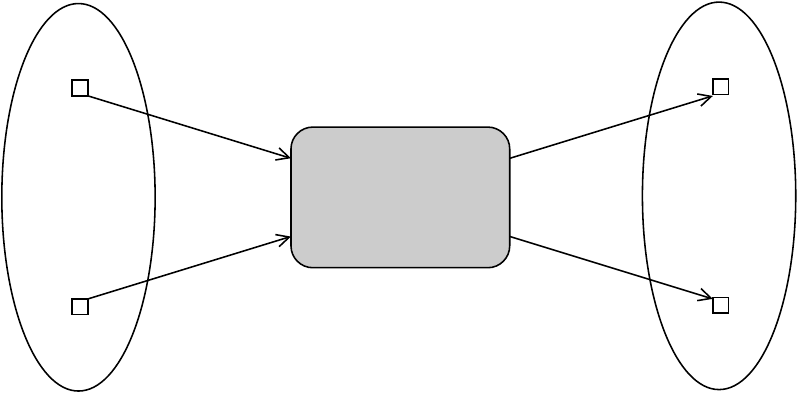_t}}
	}
	\caption{General ATG shape of strongly connected $\xor$-BANs with an 
		induced BADC of size greater than 3.}
	\label{xban_ATG}
\end{figure}

% % % % % % %  % % % % % % % % % % %  

\subsection{Proof of Theorem~\ref{thm_reachability}} \label{section_res_pf}

Let $\N$ be a strongly connected $\xor$-BAN with an induced BADC $B$ of size 
greater than 3, let $x$ be its initial (unstable) configuration, and let $x'$ 
be the configuration to reach. The idea behind the proof of Theorem 
\ref{thm_reachability} is to take advantage of the high expressiveness of 
$\xor$-BADCs and to use $B$ as a ``state generator'' that sends information 
across the network in order to set up the state of every automaton of $\N$ to 
its value in $x'$.
More precisely, if $B$ is in an unstable configuration then we will show that, 
given an automaton $i$ and a Boolean value $b$, $\N$ can always move to a 
configuration where $i$ is in state $b$ and $B$ is unstable. 

A good intuition for this is to see that, in a positive $\xor$-BADC, 
if the central automaton receives a Boolean value $1$ from one of its 
influencers then it can switch state as many times as desired by sending its 
own 
state along the opposite cycle.
To make this explicit, suppose that $x_{pred_1(o)} = 1$ then updating 
the automata along $\C_2$ will lead to a configuration where $x_{pred_2(o)} = 
x_o$ and so $f_o(x) = x_{pred_1(o)} \xor x_{pred_2(o)} = 1 \xor x_o = 
\neg{x_o}$.
Hence, in a positive network, it is possible to set any automaton $i$ to some 
state $b$, by setting $o$ to $b$ and then propagating $b$ along a path from 
$o$ to $i$. Moreover, one can ensure that this will be possible again, if 
in the end at least one of the two predecessors of $o$ is in state $1$.

To formalise this reasoning the proof of Theorem \ref{thm_reachability} 
is based on the following two lemmas.
 
 \begin{lem}\label{lem_algo_badc}
	In a $\xor$-BADC, every configuration which is not unreachable can be 
reached from 
	any other (unstable) configuration in $O(n^2)$ updates (and the bound is 
tight).
\end{lem}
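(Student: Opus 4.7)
The plan is to first apply Theorem~\ref{thm_bisim_can} to reduce to the positive $\xor$-BADC, in which $f_o(x)=x^1_{n_1}\xor x^2_{n_2}$ and every non-central automaton simply copies its predecessor. In this canonical case, a direct case analysis on single-automaton preimages shows that the all-zero configuration $\mathbf 0$ is simultaneously the unique fixed point and the unique unreachable configuration: indeed, for $F_{\{i\}}(y)=\mathbf 0$ to hold, every coordinate $y_j$ with $j\neq i$ must already be $0$, and the local function at $i$ (whether a simple copy or the central $\xor$) forces $y_i=0$ as well. It therefore suffices to reach any $x'\neq\mathbf 0$ from any $x\neq\mathbf 0$.

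The core of the argument is a \emph{generator subroutine} formalising the intuition given in the text. Call a configuration \emph{live} if at least one of $x^1_{n_1}, x^2_{n_2}$ equals $1$; any non-zero configuration can be turned into a live one in $O(n)$ preliminary updates, by driving an existing $1$ around its cycle up to a predecessor of $o$. Given a live configuration, an automaton $i$ on cycle $\C_k$, and a target bit $b$, the subroutine performs $O(n)$ updates that place $i$ in state $b$ and returns a live configuration: one propagates the state of $pred_k(o)$ along the opposite cycle so that $x^{3-k}_{n_{3-k}}=x_o$, then (using $x^k_{n_k}=1$) one update of $o$ flips $x_o$ at will, and finally a cascade of updates from $o$ along $\C_k$ carries the chosen bit down to $i$; a short tail of updates on the opposite cycle then restores the live invariant without disturbing $i$.

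Iterating the subroutine once per automaton, in an order that treats the non-central automata first and fixes $o$ last, writes $x'$ into the network using at most $n$ rounds of $O(n)$ updates, yielding the $O(n^2)$ upper bound. For the matching lower bound I would exhibit an explicit pair $x,x'$, typically chosen so that each coordinate of $x'$ depends on information that must be funnelled through the single vertex $o$ after travelling a constant fraction of the cycle; a potential-function argument tracking, for each bit of $x'$ not yet in place, the furthest position a relevant $1$ has so far propagated along its cycle then forces $\Omega(n^2)$ updates on any valid schedule. I expect this tightness claim, rather than the algorithmic upper bound, to be the main obstacle: the upper bound reduces to careful bookkeeping around the live invariant, whereas ruling out cleverer schedules requires quantifying how slowly information can cross the bottleneck at $o$.
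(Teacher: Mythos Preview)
Your reduction to the positive case via Theorem~\ref{thm_bisim_can} is correct and matches the paper. The next step, however, contains a genuine error: $\mathbf 0$ is \emph{not} unreachable. Your preimage argument breaks at ``the local function at $i$ \ldots\ forces $y_i=0$ as well''. For a non-central $i$ with predecessor $j\neq i$ one has $f_i(y)=y_j$, which does not depend on $y_i$ at all; once every $y_\ell$ with $\ell\neq i$ vanishes, $f_i(y)=0$ holds regardless of $y_i$, so $y=\neg{\mathbf 0}^{\,i}\neq\mathbf 0$ already satisfies $F_{\{i\}}(y)=\mathbf 0$. Thus $\mathbf 0$ is reachable (as befits the unique fixed point). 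The unreachable configuration of a positive $\xor$-BADC is instead the one in which \emph{every} automaton is unstable: the alternating configuration with $x^k_j\neq x^k_{j-1}$ on both cycles and $x_o\neq x^1_{n_1}\xor x^2_{n_2}$ (equivalently, the fixed point of the reverse network $\N^R$; see Section~\ref{section_res_rmk}). This misidentification is not cosmetic: your scheme would attempt to reach that alternating $x'$, which is impossible, and the correct characterisation of reachable targets---``$x'$ has at least one \emph{stable} automaton''---is precisely what the paper exploits to close the last step of its algorithm.

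There is a second, independent gap. Each call to your generator subroutine propagates from $o$ along $\C_k$ to the current target and also updates the opposite cycle to restore liveness, thereby overwriting automata that earlier calls may have already fixed. You assert that iterating in some order writes $x'$, but you give neither the order nor an argument; with both cycles touched at every call, no naive ordering works (after finishing $\C_1$, processing $\C_2$ may force you to disturb $\C_1$ again to keep a live predecessor at $1$). The paper avoids this via a genuinely different two-phase scheme: first, in $O(n^2)$ updates, drive the network to a fixed \emph{intermediate} alternating configuration in which $o$ is stable and every other automaton is unstable; second, from there each non-central automaton can be flipped by a \emph{single} update, processed from $i^k_{n_k}$ down to $i^k_2$ on each cycle so that later updates never disturb earlier ones, with the one guaranteed-stable coordinate of $x'$ handled last. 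Tightness is witnessed concretely rather than by a potential argument: moving from $(10^{n_1-1},10^{n_2-1})$ to the alternating intermediate already costs $\sum_{j=1}^{n_1}j+\sum_{j=1}^{n_2}j=\Theta(n^2)$ updates.
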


\begin{proof}
First let us recall that all $\xor$-BADCs of same size $(n_1,n_2)$ are 
equivalent with respect to behavioural isomorphism. 
This means in particular that their ATGs are isomorphic and so proving that 
Lemma~\ref{lem_algo_badc} holds for positive $\xor$-BADCs is sufficent to prove 
Lemma \ref{lem_algo_badc} completely.
Hence in the following we will assume that $B$ is \emph{positive}. 
One will notice however that the proof below is easy to adjust to any 
$\xor$-BADC.

We prove Lemma~\ref{lem_algo_badc} by presenting an algorithm that explains how 
to go from one (unstable) configuration to an other (reachable) one in the ATG 
of any positive $\xor$-BADC that has at least one cycle of size greater than 
$3$. 
The algorithm can be tuned to deal with BADCs where $n_1$ and $n_2$ are both 
less than or equal to $2$ but this multiplies the number of cases that 
need to be considered and masks the general dynamics. So for the special 
case of BADCs of size $(n_1,n_2)= (1,2)$ (or vice-versa) and $(n_1,n_2)= (2,2)$ 
we prefer to prove Lemma~\ref{lem_algo_badc} by looking directly at the form of 
their ATG. These ATGs are drawn in Figure~\ref{schema_badc12} and they both 
satisfy Lemma~\ref{lem_algo_badc} as desired.
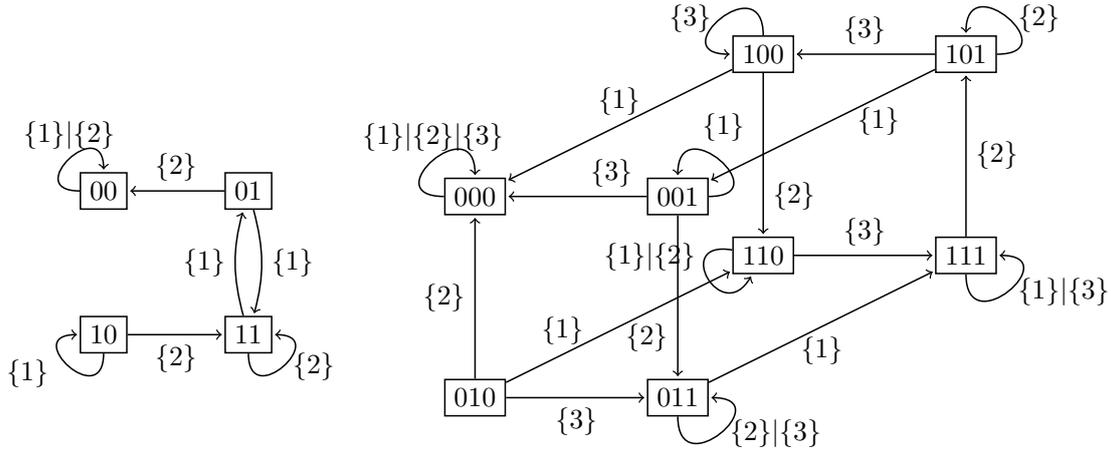
\begin{figure} 
\center
\resizebox{\textwidth}{!}{
 \begin{tikzpicture}[->,shorten >=1pt,auto,node distance=2cm, semithick]
%  \tikzstyle{every state}=[draw=black,text=black] %version Aurore
  \tikzstyle{state}=[draw=black,text=black] %version Kevin
  
  \node[state] (00) {$00$};
  \node[state] (01) [right of=00] {$01$};
  \node[state] (10) [below of= 00] {$10$};
  \node[state] (11) [right of=10] {$11$};
  \path (00) edge [out=180,in=90,looseness=5] node[above] {$\{1\}|\{2\}$} (00)
        (01) edge[bend left = 15]  node[right] {$\{1\} $} (11)
             edge  node[above] {$\{2\}$} (00)
        (10) edge node[below] {$\{2\}$} (11)
             edge [out=-90,in=-180,looseness=5] node[pos= 0.7] {$\{1\}$} (10)
        (11) edge [bend left = 15] node[left] {$\{1\}$} (01)
	     edge [out=-90,in=0,looseness=5] node[right] {$\{2\}$} (11);
	    
  \node (phantom) [below of= 10]{};
\end{tikzpicture}
 \begin{tikzpicture}[->,shorten >=1pt,auto,node distance=2.8cm, semithick]
%  \tikzstyle{every state}=[draw=black,text=black] %version Aurore
  \tikzstyle{state}=[draw=black,text=black] %version Kevin

  \node[state] (000) {$000$};
  \node[state] (001) [right of=000] {$001$};
  \node[state] (010) [below of= 000] {0$10$};
  \node[state] (011) [right of=010] {$011$};
  
  \node[state] (100) [above right of= 01]{$100$};
  \node[state] (101) [right of=100] {$101$};
  \node[state] (110)
                    [below of= 100] {$110$};
  \node[state] (111) [right of=110] {$111$};
  
  \path (000) edge [out=180,in=90,looseness=5] node[above] 
{$\{1\}|\{2\}|\{3\}$} 
(000)
        (001) edge node[left, near end] {$\{2\} $} (011)
             edge  node[above, near start] {$\{3\}$} (000)
             edge [out=0,in=90,looseness=5, near end] node[above right] 
{$\{1\}$} (001)
        (010) edge node[below] {$\{3\}$} (011)
             edge  node[left] {$\{2\}$} (000)
             edge  node[near start, above] {$\{1\}$} (110)
        (011) edge node[below] {$\{1\}$} (111)
	     edge [out=-90,in=0,looseness=5] node[right] {$\{2\}|\{3\}$} (011)
	(100) edge node[above] {$\{1\}$} (000)
	      edge node[right, near end] {$\{2\}$} (110)
	      edge[out=90,in=180,looseness=5] node[left] {$\{3\}$} (100)
	(101) edge node[below, near start] {$\{1\}$} (001)
	      edge node[above] {$\{3\}$} (100)
	      edge[out=0,in=90,looseness=5] node[right] {$\{2\}$} (101)
	(110) edge[out=170,in=240,looseness=5] node[near start, left] 
{$\{1\}|\{2\}$} (110)
	      edge node[above] {$\{3\}$} (111)
	(111) edge[out=-90,in=0,looseness=5] node[right] {$\{1\}|\{3\}$} (111)
	      edge node[right] {$\{2\}$} (101);
\end{tikzpicture}
}
\caption{The ATGs of the positive BADCs of size $(1,2)$ (left) and $(2,2)$ 
(right).} \label{schema_badc12}
\end{figure}

We now assume that $n_1 \geq 3$. The algorithm works in two steps  
(summarised in Figure~\ref{schema_algo_badc}):

\begin{enumerate}
 \item From any unstable configuration (\ie with at least one automata in state 
$1$ in the case of positive $\xor$-BADC) one can reach the highly 
expressive alternating configuration $x$ where $x_o = f_o(x)$ and $x_i = 
\neg{f_i(x)}$ for all $i \neq o$ (\ie $x_o = x_{n_1} \xor x_{n_2}$ and $x_j^k = 
\neg{x^k_{j-1}}$ for all $i^k_j \neq o$). 
This is possible for example using the following steps:
\begin{itemize}
 \item In a linear number of updates, set $x^1_{n_1}$ to $1$ and $x^2_{n_2}$ to 
$0$: 
Let $i^k_j$ be the automaton in state $1$ that is the closest to $i^1_{n_1}$ 
and update every automata on the directed path from 
$i^k_j$ to $i^1_{n_1}$.  
If $k=1$ then this simply propagates the state $1$ on every automaton from $j$ 
to $n_1$ in $\C_1$. 
If $k=2$ then this propagates the state $1$ on every automaton from $j$ to 
$n_2$ in $\C_2$ then from $1$ to $n_1$ in $\C_1$. In this second case we need 
to ensure that $x_{i_1^1} (= x_o)$ is really set to $1$ after its update, but 
this is the case since $x^1_{n_1} = 0$ and $x^2_{n_2} = 1$, hence $f_o(x) = 1$, 
by the time $o$ is updated.
Hence these first updates set $i^1_{n_1}$ to $1$.

To finish, if $x^2_{n_2}\neq 0$ (hence $x_{n_2}= 1$) then update all the 
automata of $\C_2$ from $i^1_1$ (= $o$) to $i^2_{n_2}$. When $o$ is updated 
$f_o(x) = 1\xor 1 = 0$ and so the value $0$ propagates as desired. 

 \item In a quadratic number of updates, set $\C_1$ to the alternating 
configuration where $x^1_{n_1} = 1$, \ie set $\C_1$ to $11(01)^{n_1/2-1}$ if 
$n_1$ is even and to $0(01)^{(n_1-1)/2}$ if $n_1$ is odd. This can be done as 
follows:
\begin{itemize}
 \item[]
 $\cdot$ for $j = n_1$ to $2$ do:
 \begin{itemize}
  \item \quad $\cdot$ update the automata of $\C_1$ from $1$ to $j$ 
  \item \quad $\cdot$ update the automata of $\C_2$ from $2$ to $n_2$
 \end{itemize}  
\end{itemize}

In the above algorithm, the following invariant holds: after each iteration, 
$x^1[n_1,j] 
= (10)^{(n_1-j) / 2}$ and  $x^2_{n_2} = x^1_j = x_o$, hence $f_o(x) =  
x^1_{n_1} \xor x^2_{n_2} = 1 \xor x^1_j = \neg{x^1_j}$. 
Indeed we start with $x^1_{n_1} = 1$ and $x^2_{n_2} = 0$ so by the end of the 
first iteration $x^1_{n_1} = x^2_{n_2} = x_o = 1\xor 0 = 1$.
Then, for the $j^{th}$ iteration, we start with $x^1[n_1,j+1] = 
(10)^{\frac{n_1-j+1}{2}}$ and with $f_o(x) = \neg{x^1_{j+1}}$ so we end up 
with $x^1_j = x_{n_2} = x_o = \neg{x^1_{j+1}}$ and so $x^1[n_1,j] = 
(10)^{(n_1-j) / 2}$.
 
 \item Similarly, force $\C_2$ to alternate in a quadratic number of updates 
(while preserving the alternating configuration in $\C_1$):
 \begin{itemize}
 \item[]
 $\cdot$ for $j = n_2-1$ to $2$ do:
 \begin{itemize}
  \item \quad $\cdot$ update the automata of $\C_2$ from $1$ to $j$ 
  \item \quad $\cdot$ update the automata of $\C_1$ from $n_1$ to $2$
 \end{itemize}  
\end{itemize}
 
 After each iteration, the following invariants hold: $x^2_{n_2}$ is 
unchanged, $x^1_2 = x^2_j = x_o$, $f_o(x) \neq x_o$, and $x^1[2,n_2]$ and 
$x^2[n_2,j]$ are both alternating. The first two statements are direct 
translation of the instructions. The last two require the invariant hypotheses.

By the previous point all the invariants are satisfied before entering the 
loop. Hence, right after its update $x_o \neq x^1_2$ and $x_o \neq x^2_{j+1}$. 
So after its update $x^2_j = x_o \neq x^2_{j+1}$ (hence $x^2[n_2,j]$ is 
alternating), and updating $\C_1$ in reverse order leaves it alternating. 
This also restores the fact that $x_o \neq f_o(x)$ since the state of 
$i^1_{n_1}$ has been switched with the update of $\C_1$ while the state of 
$i^2_{n_2}$ has been left unchanged.

 \item By the end of the two previous steps the system is in a configuration 
such that $f^k_j(x) = \neg{x^k_j}$ for all Automata $i_j^k$ except Automata 
$i^1_2$ and $i^2_2$.  
The last thing to do to reach a fully alternating configuration - 
where  $f_i(x) = \neg{x_i}$ for every automaton $i$ but $o$ - is thus to update 
$\C_1$ and $\C_2$ in reverse order (from $n_1$, respectively $n_2$, to $2$) and 
then update the central automaton $o$.\\
This takes a linear number of updates.
\end{itemize}
Hence, the whole sequence takes a quadratic number of updates and it results in 
one of the following alternating configurations:
 \begin{itemize}
  \item[$\cdot$] $(0(10)^{\frac{n_1-1}{2}},0(10)^{\frac{n_2-1}{2}})$ if $n_1$ 
and $n_2$ are odd, 
  \item[$\cdot$] $((10)^{\frac{n_1}{2}},1(01)^{\frac{n_2-1}{2}})$ if $n_1$ is 
even and $n_2$ is odd,
  \item[$\cdot$] $((01)^{\frac{n_1}{2}},(01)^{\frac{n_2}{2}})$ if $n_1$ and 
$n_2$ are even,
  \item[$\cdot$] $(1(01)^{\frac{n_1-1}{2}},(10)^{\frac{n_2}{2}})$ if $n_1$ is 
odd and $n_2$ is even.
 \end{itemize}

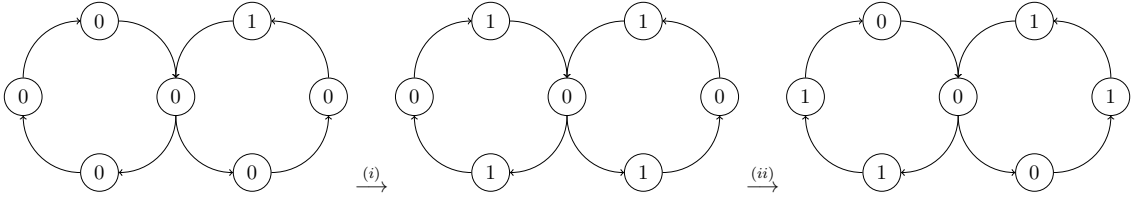
\begin{figure}
 \center
 \resizebox{\textwidth}{!}{
\begin{tikzpicture}[->,node distance = 2cm]\tikzstyle{automata}=[circle,draw, 
minimum width = .7cm];
\node[automata] (0) {};

\node[automata, below left of= 0] (1) {};
\node[automata, above left of= 1] (3) {};
\node[automata, above right of= 3] (4) {};

\node[automata, below right of= 0] (6) {};
\node[automata, above right of= 6] (7) {};
\node[automata, above left of= 7] (8) {};

\path (0) edge[bend left=45] (1)
	  edge[bend right = 45] (6)
      (1) edge[bend left=45] (3)
      (3) edge[bend left=45] (4)
      (4) edge[bend left=45] (0)
      (6) edge[bend right=45] (7)
      (7) edge[bend right=45] (8)
      (8) edge[bend right=45] (0) ;

\draw (0) node{$0$};
\draw (1) node{$0$};
\draw (3) node{$0$};
\draw (4) node{$0$};
\draw (6) node{$0$};
\draw (7) node{$0$};
\draw (8) node{$1$};
\end{tikzpicture}
$\overset{(i)}{\longrightarrow}$
\begin{tikzpicture}[->,node distance = 2cm]
\tikzstyle{automata}=[circle,draw, minimum width = .7cm];
\node[automata] (0) {};

\node[automata, below left of= 0] (1) {};
\node[automata, above left of= 1] (3) {};
\node[automata, above right of= 3] (4) {};

\node[automata, below right of= 0] (6) {};
\node[automata, above right of= 6] (7) {};
\node[automata, above left of= 7] (8) {};

\path (0) edge[bend left=45] (1)
	  edge[bend right = 45] (6)
      (1) edge[bend left=45] (3)
      (3) edge[bend left=45] (4)
      (4) edge[bend left=45] (0)
      (6) edge[bend right=45] (7)
      (7) edge[bend right=45] (8)
      (8) edge[bend right=45] (0) ;

\draw (0) node{$0$};
\draw (1) node{$1$};
\draw (3) node{$0$};
\draw (4) node{$1$};
\draw (6) node{$1$};
\draw (7) node{$0$};
\draw (8) node{$1$};
\end{tikzpicture}
$\overset{(ii)}{\longrightarrow}$
\begin{tikzpicture}[->,node distance = 2cm]
\tikzstyle{automata}=[circle,draw, minimum width = .7cm];
\node[automata] (0) {};

\node[automata, below left of= 0] (1) {};
\node[automata, above left of= 1] (3) {};
\node[automata, above right of= 3] (4) {};

\node[automata, below right of= 0] (6) {};
\node[automata, above right of= 6] (7) {};
\node[automata, above left of= 7] (8) {};

\path (0) edge[bend left=45] (1)
	  edge[bend right = 45] (6)
      (1) edge[bend left=45] (3)
      (3) edge[bend left=45] (4)
      (4) edge[bend left=45] (0)
      (6) edge[bend right=45] (7)
      (7) edge[bend right=45] (8)
      (8) edge[bend right=45] (0) ;

\draw (0) node{$0$};
\draw (1) node{$1$};
\draw (3) node{$1$};
\draw (4) node{$0$};
\draw (6) node{$0$};
\draw (7) node{$1$};
\draw (8) node{$1$};
\end{tikzpicture}
}
 \caption{An example for Lemma~\ref{lem_algo_badc} with a $(4,4)$ positive 
$\xor$-BADC, from configuration $(0000,0001)$ to configuration 
$(0110,0011)$. The algorithm works in two steps: first setting $B$ in 
a fully alternating configuration, then updating each automaton according to 
its targeted state.}
\label{schema_algo_badc}
\end{figure}

 \item Let $x$ denote the resulting alternating configuration, then any 
configuration $x'$ with at least one automaton $i^k_j$ in 
stable state (\ie such that $x'^k_j = f^k_j(x')$) is reachable from 
$x$.

 Indeed,  $x_o = f_o(x)$ and for all $i \neq o$, $x_i = \neg{f_i(x)}$ so in a 
linear number of updates we can move from the configuration $x$ to the 
configuration $\hat{x}$ where $\hat{x}_o =  x'_o$ and  $\hat{x}_i = 
\neg{f_i(\hat{x})}$ for all $i \notin \{i_k^j, o\}$. This is achieved by 
following instructions: 
\begin{itemize}
 \item[] $\cdot$ if $i_k^j \neq o$ and $x'_o \neq x_o$
 \begin{itemize}
   \item[] \quad $\cdot$ update $o$ and the automata from $n_k$ to $j$ in 
$\C_k$.
 \end{itemize}
\end{itemize}

Then, reaching $x'$ from $\hat{x}$ is straightforward: one simply needs to 
switch the state of the automata when necessary:
\begin{itemize}
 \item[] $\cdot$ for $j = n_1$ to $2$ (in $\C_1$): update the automaton $i^1_j$ 
if $\hat{x}^1_j \neq x'^1_j$;
 \item[] $\cdot$ for $j = n_2$ to $2$ (in $\C_2$):  update the automaton 
$i^2_j$ 
if $\hat{x}^2_j \neq x'^2_j$;
 \item[] $\cdot$ update the automaton $i_k^j$.
\end{itemize}
These updates are efficient since for all $i\notin \{i_k^j, o\}$, if $\hat{x}_i 
\neq x'_i$ then $x'_i = \neg{x_i} = 
f_i(\hat{x})$, 
which is the value returned by the update of $i$. Then, by definition of 
$\hat{x}$, automaton $o$ already has the right 
state. And, finally, by definition of $i_k^j$,  $x'^k_j = f^k_j(x')$, which is 
the value returned by $f_i$ after all the other automata have been updated.

The second sequence takes a linear number of steps, so the whole sequence 
remains quadratic. 
This bound is tight since going from the configuration
$x = (10^{n_1-1},10^{n_2-1})$ to a configuration $x'$ where 
$x'_i = \neg{f_i(x')}$ for all Automata $i\neq o$ (as for example the 
configuration $x' = (0(10)^{\frac{n_1-1}{2}},0(01)^{\frac{n_2-1}{2}}$ if $m$ 
and 
$n$ are odd) requires at least $\sum_{j=1}^{n_1} j + \sum_{j=1}^{n_2} j = 
\frac{n_1(n_1-1)}{2} + \frac{n_2(n_2-1)}{2}$ updates, which is in 
$\theta((n_1+n_2)^2)$.
\end{enumerate}
\end{proof}

\begin{remark}
Note that if synchronous transitions are allowed, then every configuration is 
reachable from any unstable configuration. Indeed, the above algorithm says 
that it is immediate if the target configuration is not unreachable, but it 
also tells us that if $x$ is unreachable, one can still reach the configuration 
$\hat{x} = \neg{x}^{\C_1-\{o\}}$, since in that case 
$f_o(\neg{\hat{x}}^{o}) = (\neg{\hat{x}}^{o})_{n_1}^1 \xor 
(\neg{\hat{x}}^{o})_{n_2}^2
= \hat{x}_{n_1}^1 \xor \hat{x}_{n_2}^2 
= \neg{x_{n_1}^1} \xor x_{n_2}^2 
= \neg{x_{n_1}^1 \xor x_{n_2}^2 }
= \neg{f_o(\neg{x}^o)}
= x_o = \hat{x}_o$ (we assume $B$ positive) and so $\hat{x}$ is not 
unreachable. 

Then for every automaton $i_j^1$ of $\C_1-\{o\}$, 
$f_j^1(\hat{x}) = f_j^1(\neg{x}^{\C_1-\{o\}})
= (\neg{x}^{\C_1-\{o\}})_{j-1}^1
=  \neg{x_{j-1}^1} 
= \neg{f_j^1(\neg{x}^{i_1^j})} 
= x_{j-1}^1$, so the synchronous update of $\C_1-\{o\}$ 
changes the configuration of the system from $\hat{x}$ to $x$.
\end{remark} 

\begin{lem}\label{lem_influence}
	In a $\xor$-BAN $\N$, if $i$ and $j$ are two automata such that there is a 
path 
	from $i$ to $j$, then for any configuration $x$ such that $i$ is unstable 
	in $x$ there exists a configuration $x'$ reachable from $x$ such that $j$ is 
unstable in $x'$. 
\end{lem}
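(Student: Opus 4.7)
The plan is to argue by induction on the length $k$ of the path $\pi = i_0 i_1 \ldots i_k$ from $i = i_0$ to $j = i_k$, with the key observation that in a $\xor$-BAN, every local function is sensitive to each of its influencers: since $f_m = \Xor_{\ell \in I_m} \sigma_\ell(x_\ell)$, flipping the state of any single influencer $\ell \in I_m$ necessarily flips the value of $f_m$, i.e.\ $f_m(x) \neq f_m(\neg{x}^\ell)$.

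The base case $k=0$ is trivial: take $x' = x$. For the inductive step, suppose the lemma holds for all paths of length $k$, and consider a path $i_0 i_1 \ldots i_{k+1}$ together with a configuration $x$ in which $i_0$ is unstable. By the induction hypothesis applied to the subpath $i_0 \ldots i_k$, there exists a configuration $y$ reachable from $x$ in which $i_k$ is unstable. First I would check whether $i_{k+1}$ is already unstable in $y$: if so, we take $x' = y$ and are done. Otherwise, I would perform one more asynchronous update on $i_k$, producing a configuration $y'$ with $y'_{i_k} = f_{i_k}(y) \neq y_{i_k}$ and $y'_m = y_m$ for every other automaton.

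The main step is then to show that this single update destabilises $i_{k+1}$. Since $i_k$ influences $i_{k+1}$ (it is the predecessor of $i_{k+1}$ in the path), and $\xor$-local functions are sensitive to every influencer, we have $f_{i_{k+1}}(y') = f_{i_{k+1}}(\neg{y}^{i_k}) \neq f_{i_{k+1}}(y)$. Combined with the assumption that $i_{k+1}$ was stable in $y$, i.e.\ $f_{i_{k+1}}(y) = y_{i_{k+1}} = y'_{i_{k+1}}$, this yields $f_{i_{k+1}}(y') \neq y'_{i_{k+1}}$, so $i_{k+1}$ is unstable in $y'$, and we set $x' = y'$.

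I do not expect any real obstacle: the only subtlety is to use the $\xor$-structure correctly to ensure the sensitivity property, which is why the lemma is specific to $\xor$-BANs and would fail for arbitrary Boolean local functions (where an influencer's flip need not propagate for a given configuration). The construction also gives an upper bound on the number of updates used, namely the length of the path, which will be useful for the complexity estimate in Theorem~\ref{thm_reachability}.
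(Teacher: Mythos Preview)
Your proof is correct and rests on the same key observation as the paper's: in a $\xor$-BAN every local function is sensitive to each of its influencers, so updating an unstable automaton flips the stability of each of its stable out-neighbours. The paper casts this as a direct procedure on a \emph{shortest} path---find the last unstable vertex $i_\ell$ and update $i_\ell,i_{\ell+1},\ldots,i_{k-1}$ in order---and explicitly needs the shortest-path hypothesis so that updating some $i_m$ cannot affect any $i_{m'}$ with $m'>m+1$ through a shortcut arc. Your inductive formulation neatly sidesteps this subtlety: at each step you only check whether $i_{k+1}$ is already unstable in the configuration $y$ delivered by the inductive hypothesis, and if not, a single update of $i_k$ suffices regardless of what happened earlier along the path or whether the path is shortest. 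Both arguments use at most $k$ updates, all confined to $\{i_0,\ldots,i_{k-1}\}$, which is exactly what is needed when the lemma is invoked in the proof of Theorem~\ref{thm_reachability}.
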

\begin{proof}
The proof is based on the fact that, in a $\xor$-BAN, making a stable 
automaton become unstable can simply be achieved by switching the state of one 
of its incoming neighbours (because the state of an automaton depends on the 
parity of the number of its incoming neighbours in state $1$).

So let $i$ and $j$ be two automata as described in Lemma~\ref{lem_influence}, 
let $p = i_0,i_1,\ldots, i_k$ be a shortest path (in the interaction graph of 
$\N$) from $i = i_0$ to $j = i_k$ and let $i_\ell$ denotes the last 
automaton in $p$ that is unstable. Then updating along $p$ from $i_\ell$ to 
$i_{k-1}$ (so that nothing happens if $\ell = k$, \ie if $j$ is unstable) will 
lead to a configuration where $j$ is unstable. 
This is straightforward from the remark above. The only subtlety is the choice 
of the path which must ensure that the update of one automaton only affects the 
next automaton on the path but not the automata after it, and this is true if 
one takes a shortest path. 
\end{proof}

Putting things together we can now describe the algorithm underlying the proof 
of Theorem \ref{thm_reachability}:
\begin{proof}
Let $B$ be an induced BADC of size greater than $3$ in the BAN $\N$ and let $x$ 
and $x'$ respectively be the initial configuration and the target configuration 
described in Theorem~\ref{thm_reachability}. The configuration $x$ is not 
stable so, by Lemma~\ref{lem_influence}, it is possible to go from $x$ to a 
configuration $y$ where one automaton of $B$, hence $B$, is not stable. Then, 
using Lemmas~\ref{lem_algo_badc} and~\ref{lem_influence}, we claim that it is 
possible to set the state of every automata $i$ outside of $B$ to its value in 
$x'$ while keeping $B$ in an unstable configuration.

The idea is as follows: let $i$ be an automaton that is not in $B$ and let $p = 
i_0i_1\ldots i_k$ be a shortest path (in the interaction graph of $\N$) from 
$B$ 
to $i_k = i$. Then, applying the algorithm from Lemma~\ref{lem_algo_badc}, 
we know how to reach a configuration where $i_0$ is unstable and so, using the 
algorithm from Lemma~\ref{lem_influence}, we know how to reach a configuration 
where $i$ is unstable. From this configuration we can set the state of $i$ to 
$x'_i$ by updating $i$ if necessary. 

So, if we can guarantee that this process preserves the instability in $B$, 
then 
we can use it repetitively on every automaton outside of $B$ to reach a 
configuration where $B$ is unstable and where all automata outside of $B$ are 
in 
the state specified by $x'$. Once this is done we only need to set $B$ to its 
right value to reach $x'$ and, since $B$ is unstable, this can be done by using 
the algorithm from Lemma~\ref{lem_algo_badc}. 

In fact, setting the automata outside of $B$ to their state in $x'$ cannot be 
done in any order. Indeed, the algorithm from Lemma~\ref{lem_influence} 
requires to switch the state of some automata outside of $B$ (namely the one 
along the path from $B$ to the automaton to be set up). Hence we need to 
guarantee that the automata that have already been treated are not switched 
again while processing the other automata. A way to ensure that is to compute a 
breadth first search tree of root $B$ and to treat the automata in the order 
given by the tree from the leaves to the root, using the branches of 
the tree as the paths from $B$ to the automata to be treated.
An example of such ordering is given in Figure~\ref{fig_bft}. 

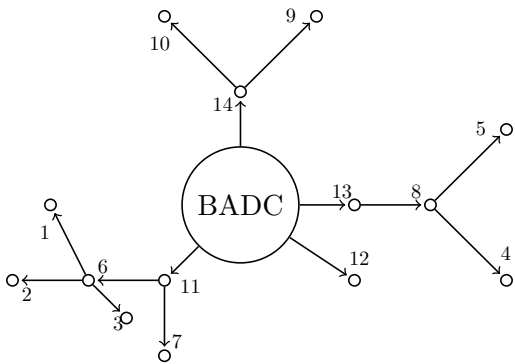
\begin{figure*}[h]
  \center
\begin{tikzpicture}[->,shorten >=1pt,auto,node distance=1.5cm, semithick]
\tikzstyle{vertex}=[draw = black,circle,text=black,inner sep=1.5pt]
\tikzstyle{label}= [very near end,scale=0.7]

  \node[vertex] (badc) at (3,0) {\ BADC \ };
  \node[vertex] (1) at (2,-1) {};
  \node[vertex] (2) at (3,1.5) {};
  \node[vertex] (3) at (2,2.5) {};
  \node[vertex] (4) at (4,2.5) {}; 
  \node[vertex] (5) at (2,-2) {};
  \node[vertex] (6) at (1,-1) {};
  \node[vertex] (7) at (1.5,-1.5) {};
  \node[vertex] (8) at (4.5,0) {};
  \node[vertex] (9) at (5.5,0) {};
  \node[vertex] (10) at (6.5,1) {};
  \node[vertex] (11) at (6.5,-1) {};
  \node[vertex] (12) at (4.5,-1) {};
  \node[vertex] (13) at (0,-1) {};
  \node[vertex] (14) at (0.5,0) {};
  
  \path (badc) edge node[label]{11} (1) edge node[label]{12} (12) edge 
node[label]{14} (2) edge node[label]{13} (8);
  \path (1) edge node[label]{7} (5) edge node[label,above]{6} (6) (6) edge 
node[label,below]{3} (7) edge node[label]{2} 
(13) edge node[label]{1}  (14);
  \path (2) edge node[label]{10} (3) edge node[label]{9} (4);
  \path (8) edge node[label]{8} (9) (9) edge node[label]{5} (10) edge 
node[label]{4} (11);
\end{tikzpicture} 
\caption{Example of update order using a breadth first tree.} \label{fig_bft}
\end{figure*}

Moreover the use of Lemma~\ref{lem_algo_badc} at the end of the update
sequence requires that the restriction of $x'$ to $B$ is not unreachable for 
$B$ (\ie for $B$ viewed as a $\xor$-BADC whose local transition functions are 
fixed by its surrounding environment in $x'$).
If this is not the case, we have to get around the problem by using the same 
kind of trick that the one used in the second step of the proof of 
Lemma~\ref{lem_algo_badc} \textendash when the stable state of the target 
configuration is not the central node $o$:

Let $i$ is an automaton of $\N$ such that $f_i(\neg{x'}^i) = x'_i$ ($i$ exists 
since $x'$ is reachable), and let $p= i_0\ldots i_k $ be a shortest path from 
$i = i_0$ to $B$. Then we first reach the configuration $\hat{x}$ such that 
\emph{(i)} $\hat{x}_j = x'_j$ if $j\notin p$, \emph{(ii)} $i_k (\in B)$ is such 
that the restriction of $\hat{x}$ to $B$ is reachable for $B$, and \emph{(iii)} 
the state values of the automata in $p$ are ``alternating'' in such a way that 
if we set up the state of the automata of $p$ to their value in $x'$ from $i_k$ 
to $i_1$ then every time an automaton $i_\ell$ is about to be set up, its 
predecessor in $p$ must be unstable so as to enable $\ell$ to switch state if 
necessary. 

With such conditions it is easy to go from $\hat{x}$ to $x'$: one only 
needs to set up $p$ back up. As described in condition \emph{(iii)}, every 
automaton in $p-\{i_0\}$ will be able to switch state in turn if necessary, 
then, in the end, if $i_0$ is not already in state $x'_{i_0}$ it will still be 
able to switch to the right state since $f_i(\neg{x'}^i) = x'_i$ by assumption.

The configuration $\hat{x}$ described above can be computed inductively by 
taking the $k^{th}$ iteration, $\hat{x}^{k}$, of:  
\begin{enumerate}
 \item $\hat{x}^{0} = x'$
 \item for $\ell>0$, $\hat{x}^{\ell}$ is inductively defined by:  
$\hat{x}^{\ell}_j = \hat{x}^{\ell-1}_j$ for all $j \notin \{i_{\ell-1},
i_{\ell} \}$,
$\hat{x}^{\ell}_{i_\ell} = \neg{x'_{i_\ell}}$, \quad and
$\hat{x}^{\ell}_{i_{\ell-1}}$  is the solution of  
$f_{i_{\ell-1}}(\hat{x}^{\ell}) = \neg{\hat{x}^{\ell}_{i_{\ell-1}}} $
\end{enumerate}

Finally, to conclude the proof above, we still need to precise the way of using 
the algorithm from Lemma~\ref{lem_influence} that ensures that the instability 
of $B$ is preserved by the updates outside of $B$.

So let $z^0$ be the current configuration and let $p=i_0,\ldots,i_k$ be the 
path from $B$ to the automaton to be set up. Moreover, let $j\neq i_0$  be an 
influencer of $i_0$ in $B$.

By assumption, $B$ is unstable in $z$, so one can use Lemma~\ref{lem_algo_badc} 
to put $\N$ in a configuration $z^0$ where $i_0$ is unstable, and such that:
$$z^0_j = \left\{\begin{array}{ll}
\neg{f_j(\neg{z^0}^{i_1})} &\text{ if $i_1$ is an influencer of } i_0 \
(i_1\in I(i_0)) \\
\neg{f_j(\neg{z^0}^{\{i_0,i_1\}})} &\text{ if $i_1$ is not an influencer of } 
i_0 \ (i_1\notin I(i_0))
\end{array}\right. .
$$
Actually, one can only guarantee that this is possible if $B$ has one cycle of 
size at least $3$, which enables to ask for a third automaton (different from 
$i_0$ and $j$) to be stable in $z^0$, making $z^0$ reachable for $B$.

From there one can start applying Lemma~\ref{lem_influence}: \\
Let $i_\ell$ be the last automaton in $p$ that is unstable. If $\ell\leq 
1$, then start updating $p$ from $i_\ell$ to $i_1$. This leaves $\N$ in a 
configuration $z^1$ such that $B$ is unstable. Indeed:
\begin{itemize}
 \item either nothing happened ($\ell>1$) and so $B$ is still unstable (because 
$i_0$ is unstable in $z^0$ for example).

 \item or Automaton $i_1$ is the only to have been updated and so:\\
 \emph{(i)} if $i_1\in I(i_0)$, then $z^1_j = z^0_j = 
\neg{f_j(\neg{z^0}^{i_1})} 
= \neg{f_j(z^1)}$ and so $j$ is unstable in $z^1$; \\
 \emph{(ii)} if $i_1\notin I(i_0)$  then the neighbourhood of $i_0$ has not 
changed so $i_0$ is still unstable in $z^1$.

\item or both Automata $i_0$ and $i_1$ have been updated and so:\\
\emph{(i)} if $i_0\notin I(i_0)$ ($i_0$ has no self loop) and if 
$i_1 \in I(i_0)$, then $i_0$ is still unstable (since it has changed 
and an odd number of its incoming neighbours have changed too);\\
\emph{(ii)} if $i_1 \notin I(i_0)$ then as previously $z^1_j = z^0_j 
= \neg{f_j(\neg{z^0}^{\{i_0,i_1\}})} =  \neg{f_j(z^1)}$ and so $j$ is unstable 
in $z^1$; \\
\emph{(iii)} if $i_0 \in I(i_0)$ then $i_0$ is not an influencer of $j$ 
(because $B$ is an induced BADC of size $3$ and $j$ has been chosen to be the 
predecessor of $i_0$ different from $i_0$) 
so $f_j(\neg{z^0}^{\{i_1\}}) =  f_j(\neg{z^0}^{\{i_0,i_1\}})$, so $z^1_j = 
\neg{f_j(\neg{z^0}^{\{i_0,i_1\}})}$ 
which means as previously that $j$ is unstable in $z^1$.
\end{itemize}
Now, let $\ell' = \max(2,\ell)$, $\ell'$ is the last automaton of $p$ 
to be unstable in $z^1$.
Then, again, $B$ is unstable in $z^1$ so we can use Lemma~\ref{lem_algo_badc} 
to 
reach a configuration $z^2$ such that 
$z^2_{i_0} = \neg{f_{i_0}(\neg{z^1}^{\{i_{\ell'},\ldots, i_{n-1}\}})}$ 
and $z^2_i = z^1_i$ for all $i \notin B$. 
Moreover, since $p$ was chosen to be a shortest path, no automata in $B$ 
influence the automata of index greater than $2$ in $p$. So the last automaton 
of $p $ that is unstable in $z^2$ is still $i_{\ell'}$. 

Hence we can finish running the algorithm of Lemma~\ref{lem_influence} (by 
updating the automata along $p$ from $i_{\ell'}$ to 
$i_{n-1}$) and be sure that this leads to a configuration where $i_{n-1}$ is 
unstable. We also know that in this 
configuration $B$ is unstable since $i_0$ has state $ 
\neg{f_{i_0}(\neg{z^1}^{\{i_{\ell'},\ldots, i_{n-1}\}})}$. 

This last remark concludes the proof of Theorem~\ref{thm_reachability}.
\end{proof}

% % % % % % % % % % % % % % % % % % % % % % % % % 

\subsection{Algorithmic complexity}
The algorithm described above is quadratic in the worst case. However, its  
complexity highly depends on the structure of the network and/or the final 
configuration $x'$. 
For example, if every automaton in $\N$ is at bounded distance from the 
central node of an induced BADC of size greater than $3$, then this algorithm 
becomes linear in $n$. 
Similarly, since the number of passes that are needed along a path depends on 
the number of alternating states (\ie $01$ or $10$ patterns) along this path in 
$x'$, then if this number is less than a constant in any path the algorithm 
will also run in linear time. This is especially the case when $x'$ is a 
fixed point of $\N$ and so every transient configuration can reach every stable 
state in a linear number of updates.

Finally we need to point out the fact that this algorithm does not always 
provides the most efficient sequence of updates (for example it does not take 
into account the starting configuration) hence the complexity of this algorithm 
is only an upper bound on the length of the shortest path between two 
configurations. However, let us notice that this bound can sometimes be 
reached, as when one move from configuration $10^{n-1}$ to configuration 
$(10)^{n/2}$ in a positive $\xor$-BADC of size $n$. These 
considerations on $01$ patterns echo to the notion of {\em 
expressiveness} defined for the monotonic case in \cite{mrrs15}.

% % % % % % %  % % % % % % % % % % % 
\subsection{Fixed points and unreachable configurations} \label{section_res_rmk}

According to the definition, a configuration $x$ is a fixed point for a mode if 
it 
has no outgoing arcs but self-loops in the transition graph associated to 
this mode. In the asynchronous update mode this means that for all $i$ in $V$, 
$f_i(x) = x_i$. Hence, in a fixed point, the state of the automata along a nude 
path is completely determined by the head of this nude path. This leads to the 
following bound on the number of possible fixed points, that is related 
to the set of works~\cite{richard1,richard3,richard2}.

\begin{prop}\label{lem_n_fixpoint}
	In any BAN $\N$, the maximum number of fixed points in the asynchronous mode 
$A$ is 
	$2^k$, where $k$ is the number of automata $i$ such that $\pi_i$ is of 
length $0$
	(\ie $i$ is an ``intersection node'' in some interaction graph of $\N$).
\end{prop}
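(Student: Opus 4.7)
The plan is to show that, at any fixed point, the restriction of the configuration to the set $H = \{i \in V : |\pi_i| = 0\}$ of heads of maximal nude paths completely determines the configuration on the whole network. Since $|H| = k$, this immediately yields the bound of $2^k$ fixed points.

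First I would verify the combinatorial fact that every automaton $j \in V$ lies on some maximal nude path whose starting endpoint belongs to $H$. More precisely, if $\pi_j = i_0 i_1 \ldots i_m$ with $i_m = j$, then $i_0 \in H$. Indeed, by maximality of $\pi_j$, the path cannot be extended backwards from $i_0$, which means $i_0$ has no unique influencer (either no influencer or several). In both cases, $\pi_{i_0}$ cannot itself be extended backwards from $i_0$, so $\pi_{i_0} = i_0$ has length $0$, showing that $i_0 \in H$.

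Next I would exploit the fixed-point condition along such a path. If $x$ is a fixed point of the asynchronous mode, then $f_i(x) = x_i$ for every $i$. For every edge $i_{\ell-1} \to i_\ell$ of $\pi_j$ (with $\ell \geq 1$), the definition of a nude path forces $f_{i_\ell}(x) = \sigma_\ell(x_{i_{\ell-1}})$ with $\sigma_\ell \in \{id, neg\}$. Combining with $x_{i_\ell} = f_{i_\ell}(x)$ and iterating from $\ell = 1$ to $\ell = m$, I obtain
\[
x_j = x_{i_m} = (\sigma_m \circ \sigma_{m-1} \circ \cdots \circ \sigma_1)(x_{i_0}),
\]
so $x_j$ is entirely determined by $x_{i_0}$, with $i_0 \in H$.

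Applying this to every $j \in V$, the whole configuration $x$ is determined by its restriction to $H$. Hence the map sending a fixed point $x$ to $x|_H \in \B^k$ is injective, and the number of fixed points is bounded by $2^k$. The only subtlety in carrying this out is the bookkeeping of the argument that maximality of $\pi_j$ forces $i_0 \in H$; everything else is a direct propagation through identities and negations, and no obstacle is expected beyond making that head/endpoint distinction precise.
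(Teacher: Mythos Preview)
Your proposal is correct and follows essentially the same approach as the paper: both argue that a fixed point is completely determined by its values on the ``intersection nodes'' $H$, via propagation of the stability condition along maximal nude paths. The paper's proof is a one-sentence sketch (in fact slightly less precise than yours, since it asserts that all automata along a nude path share the \emph{same} state value, whereas you correctly track the composed signs $\sigma_m\circ\cdots\circ\sigma_1$); your version simply makes the injection $x\mapsto x|_H$ explicit.
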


\begin{proof}
 It is enough to note that a configuration $x$ is stable in $A$ only if every 
automata along a  
nude path share the same state value in $x$. In other words, $x$ is completely 
determined by the states of 
the intersection nodes of $\N$.
\end{proof}

This bound is rough and we believe that it is possible to lower it for 
subclasses of networks. However, if we define the \emph{contraction}  of a 
network to be the network obtained by removing any automaton $i$ whose incoming 
maximal nude path $\pi_i$ has length greater than 1 and replacing the variable 
$x_i$ by the variable associated to the head of $\pi_i$ in the remaining local 
functions, then any BAN whose contraction results in the trivial network 
$\{f_i(x) = x_i\}_{i\in V}$ reaches the bound of $2^k$ fixed points.

Also, notice that in the asynchronous mode, the unreachable configurations of a 
network 
$\N=\{f_i\}_{i=1}^n$ are exactly the fixed points of the \emph{reverse} network 
$\N^R=\{f^R_i\}_{i=1}^n$ defined by $f^R_i(x) = \neg{f_i(\neg{x}^i)}$. 
$\N$ and $\N^R$ are of the same type hence the maximum number of fixed points 
for the type of $\N$ will also be its maximum number of unreachable 
configurations. Moreover, this implies that if all the networks of a given type 
are behaviourally isomorphic then the number of unreachable configurations and 
the number of fixed points will be equal. 
These remarks will be illustrated by the description of the ATGs of 
$\xor$-Flowers and $\xor$-Cycle Chains presented in the next section.

% % % % % % % % % % % % % % % % % % % % % % % % % 

\section{Study of some specific $\xor$-BANs} \label{section_ex}

We now give a complete characterisation of two specific types of $\xor$-BAN: 
the $\xor$-BA Flowers and the $\xor$-BA Chains.
For each of these two types of BANs, we describe their behavioural isomorphism 
classes and give their number of fixed points and unstable configurations.
This illustrates the results of Section \ref{section_res}, and introduces 
a new method for computing isomorphism classes through the use of rewriting 
on the interaction graph of the BANs: two BANs will be equivalent if one 
can be rewritten into the other.

% % % % % % % % % % % % % % % % % % % 

\subsection{$\xor$-BA Flowers} \label{section_ex_BAF}

A $\xor$-BA Flower ($\xor$-BAF) with $m$ petals is defined as a set of $m$ 
cycles that intersect at a unique automaton $o = i^1_1 = \ldots = i^k_1$ 
($\xor$-BADCS correspond to the case $m=2$).
The following states that there are at most two isomorphism 
classes for a given type of flower, \ie for a given number 
of petals $m$ and size $(n_1,\dots,n_m)$.
\begin{prop}
The set of $\xor$-BAF with $m$ petals of size $(n_1,\ldots,n_m)$ admits one 
isomorphism class if $m$ is even and two if $m$ is odd.
\end{prop}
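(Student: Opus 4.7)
My plan is to reduce each $\xor$-BAF to a canonical form and then show that, for fixed petal sizes, there are at most two canonical forms — distinguished by a single parity bit — which merge into one when $m$ is even and stay distinct when $m$ is odd.

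First I would apply Theorem~\ref{thm_bisim_can} to put any $\xor$-BAF into canonical form: every non-central local function becomes a positive literal $f_{i^k_j}(x) = x^k_{j-1}$, while the central function takes the shape $f_o(x) = \Xor_{k=1}^m \sigma_k(x^k_{n_k})$ with $\sigma_k \in \{id, neg\}$. Because of the identity $\neg a \xor \neg b = a \xor b$ any two negations in $f_o$ cancel, so $f_o$ is determined by the parity $p = \#\{k : \sigma_k = neg\} \bmod 2$, and at most two canonical forms remain, one for $p=0$ and one for $p=1$.

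For $m$ even I would construct an explicit isomorphism between the $p=0$ and $p=1$ canonical forms via Lemma~\ref{lem_bisim_cond}, taking $\varphi = id$ and $\phi_i = neg$ for every $i$ — the same ``flip all states'' trick already used in the text to identify positive and mixed $\xor$-BADCs. Each petal arc has both endpoints flipped so stays positive, while expanding the defining relation $\phi_o(f_o(x)) = f'_o(\phi(x))$ for the central automaton yields a new parity $p' \equiv 1 + p + m \pmod 2$. When $m$ is even this reads $p' \equiv 1 + p$, collapsing the two parities into a single equivalence class.

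For $m$ odd I would separate the two canonical forms by counting asynchronous fixed points. In canonical form a fixed point forces $x^k_j = x_o$ along every petal, so stability at $o$ reduces to $p + m\, x_o \equiv x_o \pmod 2$; for $m$ odd this collapses to $p \equiv 0 \pmod 2$. Hence the $p=0$ form admits exactly two fixed points (one per choice of $x_o$), while the $p=1$ form admits none, and the two networks cannot be behaviourally isomorphic. The most delicate part will be the $f_o$ computation in the $m$-even case: one has to combine three simultaneous sources of negation — negating the output of $f_o$, flipping every argument via $\phi_{i^k_{n_k}} = neg$, and reinterpreting the signs $\sigma_k$ — and then use the parity-only dependence established earlier to reach the clean formula $p' \equiv 1 + p + m$. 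Everything else is routine bookkeeping on top of Lemma~\ref{lem_bisim_cond}.
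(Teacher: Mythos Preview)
Your proposal is correct and follows essentially the same route as the paper: reduce to canonical form via Theorem~\ref{thm_bisim_can}, collapse to a single parity bit using $\neg a \xor \neg b = a \xor b$, merge the two parities when $m$ is even via the global negation $\phi(x)=\neg{x}^V$, and separate them when $m$ is odd by counting fixed points. The only cosmetic differences are that you invoke Lemma~\ref{lem_bisim_cond} explicitly and carry out the parity computation $p' \equiv 1+p+m$ in detail, and you fold the fixed-point count into the proof rather than deferring it to Proposition~\ref{lem_ex_BAF_stable} as the paper does.
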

\begin{proof}
 Similarly to what is done in Section \ref{section_def} for $\xor$-BADCs, we 
restrict our study to canonical $\xor$-BAFs, that are $\xor$-BAFs such 
that the only negative literals are in the local function of $o$ 
(Theorem~\ref{thm_bisim_can}). 
According to the identity $b_1 \xor b_2 = \neg{b_1} \xor \neg{b_2}$ that holds 
for every Boolean values $b_1$, $b_2$, the sign of any pair of negative 
literals cancel in $f_o$. So there are at most two isomorphism classes for a 
given type of flower: 
the positive one, where $f_o$ has only positive literals (which thus 
corresponds to BAFs with an even number of negative cycles), 
and the negative one where $f_o$ has exactly one negative literal (which thus 
corresponds to BAFs with an even number of negative cycles).

Moreover, when $m$ is even, the bijection $\phi(x) = \neg{x}^V$ over the set of 
configurations actually defines an isomorphism between the ATGs of the negative 
and the positive $\xor$-BAF of same type. Therefore, for $m$ even, the negative 
and positive classes coincide.
On the contrary, when $m$ is odd, the two classes remain distinct since, in 
particular, they do not have the same number of fixed points, as this is shown 
in the next proposition (\ref{lem_ex_BAF_stable}).
\end{proof}

\begin{prop}\label{lem_ex_BAF_stable}
  A positive $\xor$-BAF with $m$ petals has a unique stable configuration, 
$0^{n}$, if $m$ is even and two stable configurations, $0^{n}$ and $1^{n}$, if 
$m$ is odd. A negative $\xor$-BAF (with an odd number of petals) does not have 
any fixed point.
\end{prop}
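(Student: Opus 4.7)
The plan is to exploit the canonical form of $\xor$-BAFs so that the only place any negative literal can appear is inside $f_o$, and then to reduce the fixed point condition to a single equation at $o$.

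First I would recall that, by Theorem~\ref{thm_bisim_can}, it suffices to work with a canonical representative, so that for every non-central automaton $i^k_j$ (with $j>1$) we have $f_{i^k_j}(x)=x^k_{j-1}$, while $f_o(x)=\sigma_1(x^1_{n_1})\xor\cdots\xor\sigma_m(x^m_{n_m})$ with $\sigma_k\in\{id,neg\}$. In a configuration $x$ that is a fixed point of the asynchronous mode, every automaton must be stable, so in particular every non-central automaton $i^k_j$ must satisfy $x^k_j=x^k_{j-1}$. Propagating this equality along each petal from $i^k_{n_k}$ down to $i^k_2$, and recalling that $i^k_1=o$, I would conclude that all $n$ components of $x$ are equal to the common value $b:=x_o$. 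So any fixed point must be either $0^n$ or $1^n$, and the only remaining question is whether either of these is actually stabilised by $f_o$.

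Evaluating $f_o$ at the configuration where every state equals $b$ gives $f_o(b^n)=\bigoplus_{k=1}^m\sigma_k(b)$. In the positive case all $\sigma_k$ are the identity, so $f_o(b^n)=m\cdot b\pmod 2$. When $m$ is even this equals $0$, so the equation $f_o(b^n)=b$ forces $b=0$ and yields the unique fixed point $0^n$; when $m$ is odd this equals $b$, so both $b=0$ and $b=1$ satisfy the equation and there are exactly two fixed points $0^n$ and $1^n$. In the negative case exactly one $\sigma_k$ is $neg$, say $\sigma_1$, so $f_o(b^n)=\neg{b}\xor(m-1)\cdot b\pmod 2$; when $m$ is odd we have $m-1$ even and therefore $f_o(b^n)=\neg{b}\neq b$, so no configuration of the form $b^n$ is stabilised and there are no fixed points.

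I do not expect any real obstacle: the canonical reduction collapses the structural complexity to a single xor equation at $o$, and the case analysis on the parity of $m$ and on positive versus negative then closes each case by a direct parity count. The only subtlety to be careful about is to invoke Theorem~\ref{thm_bisim_can} cleanly so that the propagation of equalities along the nude paths of each petal is justified, and to note that isomorphism preserves fixed points so the count obtained for the canonical representative is the count for the whole class.
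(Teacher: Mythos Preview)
Your proof is correct and follows essentially the same approach as the paper's: both arguments first observe that stability along the nude paths of each petal forces any fixed point to be a constant configuration $b^n$, and then reduce everything to checking the single equation $f_o(b^n)=b$ at the central automaton. The paper phrases this as a propagation-and-contradiction argument (assume some automaton is in state $1$, push it to $o$, and check whether $1^n$ can be stable), whereas you carry out the parity computation at $o$ directly and exhaustively; your presentation is in fact slightly more complete, since the paper only spells out one case and leaves the others to ``similarly''. One minor remark: invoking Theorem~\ref{thm_bisim_can} is harmless but not really needed here, since the \emph{positive} and \emph{negative} $\xor$-BAFs in the statement are already, by definition, the canonical representatives of their classes.
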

\begin{proof}
There are several ways to compute the fixed points of a $\xor$-network.
% \erase{
%: one can for example compute the solution of the linear system $Ax = x$ where 
% $A$ is the adjacency matrix associated to the interaction graph of the 
network, 
% an other
% }
One way is to fix the state of one automaton and to propagate the information 
that this choice implies on the state of the other automata in the network, 
making new choices when necessary, until having completely fixed the 
configuration or 
until reaching a contradiction.
% \erase{
%This  basic backtracking algorithm is not an efficient way of computing fixed 
% points but \tocheck{this point of view / it} is useful to prove that some 
% configurations cannot be stable.
% }

For example, in a positive $\xor$-BAF $\mathcal{F}$ with an even number of 
petals, any configuration $x$ that contains an automaton $i$ in 
state $1$ is unstable.
Indeed suppose for the sake of contradiction that $x$ is stable, then $o$, and 
so every automata in $\mathcal{F}$, are in state $1$ (because updating from $o$ 
to $i$ implies that $x_i=x_o$), so $x = 1^n$. But $1^n$ is not stable since 
$f_o(x) = \Xor_{k=1}^m 1 = 0$. This is a contradiction.
Similarly we prove for a negative $\xor$-BAF with an odd number of petals, 
if a configuration contains an automaton in state $0$, respectively an 
automaton 
in state $1$, then it cannot be stable, and so the BAF has no fixed points.
\end{proof}

The results above allow us to fully characterise the ATG, $G^A_\mathcal{F}$, of 
any $\xor$-BAF, $\mathcal{F}$, of a given type:
\begin{itemize}
 \item if $\mathcal{F}$ has an even number of petals then 
$\mathcal{F}$ and $\mathcal{F}^R$ are in the same isomorphism class 
(the unique positive class). 
Hence, $G^A_\mathcal{F}$, has exactly one unreachable configuration, one 
fixed point, and one SCC of $2^{n}-2$ transient configurations.
 \item if $\mathcal{F}$ has an odd number of petals then  
$G^A_\mathcal{F}$ can have four different shapes depending on the size of $\F$ 
and depending on its isomorphism class.
Indeed, $\mathcal{F}$ and $\mathcal{F}^R$ are isomorphic if and only if 
$\mathcal{F}$ has an even number of petals of even sizes and a self loop, or if 
it has an odd number of petals of even sizes and no self loop. 
Hence $G^A_\mathcal{F}$ has one of the following forms:
\emph{(i)} a unique attractor of size $2^n$ if $\mathcal{F}$ 
and $\mathcal{F}^R$ are in the negative class ; 
\emph{(ii)} two unreachable configurations, two fixed points, and one SCC 
of $2^n-4$ transient configurations if $\mathcal{F}$ and $\mathcal{F}^R$ 
are in the positive class;
 \emph{(iii)} two fixed points, and one SCC of $2^n-2$ transient 
configurations if $\mathcal{F}$ is in the positive class and $\mathcal{F}^R$ 
in the negative class ;
\emph{(iv)}  two unreachable configurations and one attractor of size 
$2^n-2$ if $\mathcal{F}$ is in the negative class and $\mathcal{F}^R$ in the 
positive class.
\end{itemize}

% % % % % % % % % % % % % % % % % % % % % % % % % % % % % % % % % % % % % % 

\subsection{$\xor$-BAC Chains} \label{section_ex_BACC}
A $\xor$-BAC Chain ($\xor$-BACC) of length $m$ is described by a set of $m$ 
cycles, $\C_k$, and $m-1$ intersection automata, $o_k$, such that for all 
$1\leq k < m$, $\C_k$ intersects $\C_{k+1}$ at a unique point
$o_k= i_1^k = i_{\ell_k}^{k + 1}$. 
As previously, we characterise the isomorphism classes and the ATG of this 
type of BANs.

% % % % % % % % % % % % % % % % % % % % % % % % % % % % % % % % % % % % % % 
\begin{figure}[t!]
	\centerline{\includegraphics[scale=.25]{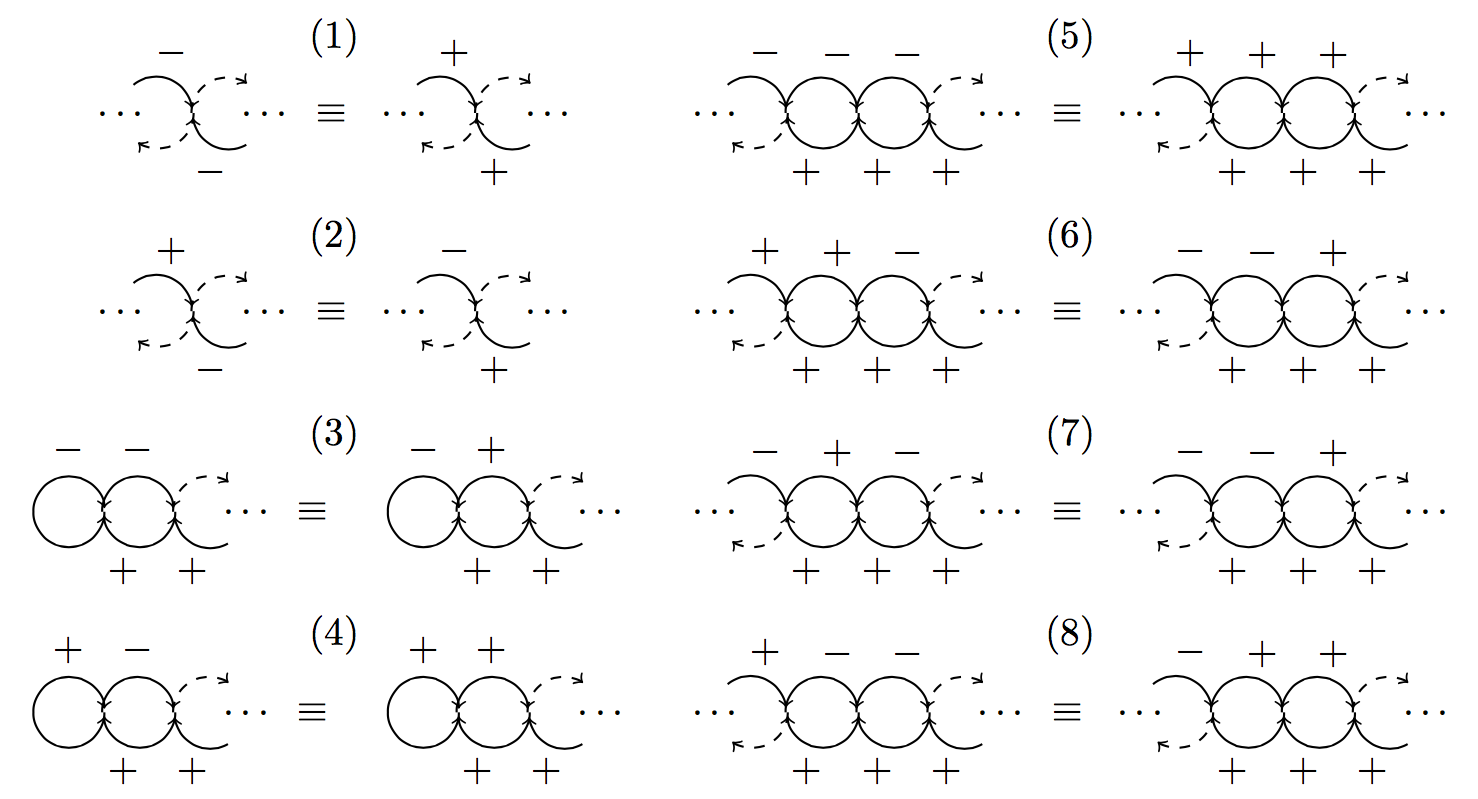}}
	\caption{Table of $\xor$-equivalences.}
	\label{eq_table}
\end{figure}

% % % % % % % % % % % % % % % % % % % % % % % % % % % % % % % %

\subsection*{\textbf{Isomorphism classes}}
\begin{prop} \label{lem_bacc_classes}
The set of $\xor$-BACCs of length $m$ and size $(n_1,\ldots,n_m)$ admits one  
isomorphism class if $m-1$ is not a multiple of $3$ 
and two if $m-1$ is a multiple of $3$.
\end{prop}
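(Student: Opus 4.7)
The plan is to reduce the classification problem to linear algebra over $\mathbb{F}_2$ on an "intersection sign vector" characterising canonical $\xor$-BACCs. For $m = 1$ (a pure BAC) the two classes correspond to the two possible cycle signs, and for $m = 2$ the single-class statement was already established in Section~\ref{section_def}; I therefore focus on $m \geq 3$. First I would invoke Theorem~\ref{thm_bisim_can} to restrict to canonical representatives, in which negative literals occur only in the local function of an intersection automaton. For each intersection $o_k$ the function $f_{o_k}(x) = \sigma^1_k(x_{a_k}) \xor \sigma^2_k(x_{b_k})$ carries two sign bits, which the $\xor$-identity $b_1 \xor b_2 = \neg b_1 \xor \neg b_2$ collapses to a single bit $\epsilon_k \in \mathbb{F}_2$ recording whether the two literals agree in sign. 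This produces at most $2^{m-1}$ canonical BACCs of a given type, parameterised by $\epsilon = (\epsilon_1, \dots, \epsilon_{m-1}) \in \mathbb{F}_2^{m-1}$.

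Next I would translate isomorphisms between canonical BACCs into operations on $\epsilon$ via Lemma~\ref{lem_bisim_cond}. An isomorphism preserving the underlying graph is determined by a set $S = \{i : \phi_i = \mathrm{neg}\}$; for the image network to remain canonical, $S$ must be uniform on every nude path (the non-intersection automata on a nude path share the $S$-membership of its starting intersection), hence $S$ is parametrised by a bit $s_k$ per intersection $o_k$. A direct check, using the fact that $\epsilon_k$ flips precisely when an odd number of $\{o_k, \mathrm{pred}_1(o_k), \mathrm{pred}_2(o_k)\}$ lie in $S$, and that the two predecessors of the unique intersection of each end cycle share the same $S$-membership, yields
\[
\epsilon_1 \mapsto \epsilon_1 \oplus s_2, \qquad
\epsilon_k \mapsto \epsilon_k \oplus s_{k-1} \oplus s_k \oplus s_{k+1} \ \ (2 \leq k \leq m-2), \qquad
\epsilon_{m-1} \mapsto \epsilon_{m-1} \oplus s_{m-2}.
\]
Hence the orbit of any $\epsilon$ is the coset $\epsilon + V$, where $V \subseteq \mathbb{F}_2^{m-1}$ is spanned by the $m-1$ basic moves ($s_k = 1$, others $0$).

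It remains to compute $\dim V$, which I would do via the orthogonal complement: $\lambda \in V^\perp$ iff $\lambda_2 = \lambda_{m-2} = 0$ and the recurrence $\lambda_{k-1} + \lambda_k + \lambda_{k+1} = 0$ holds for $2 \leq k \leq m-2$. Iterating $\lambda_{k+1} = \lambda_{k-1} + \lambda_k$ from $(\lambda_1, \lambda_2) = (a, 0)$ produces the period-$3$ sequence $a, 0, a, a, 0, a, a, 0, \dots$, so $\lambda_j = 0$ iff $j \equiv 2 \pmod 3$. The boundary condition $\lambda_{m-2} = 0$ is then compatible with $a = 1$ precisely when $m - 2 \equiv 2 \pmod 3$, i.e.\ when $m - 1 \equiv 0 \pmod 3$. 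Consequently $V$ has codimension $1$ in that case, yielding two classes separated by the linear invariant $\lambda(\epsilon) = \sum_{j \not\equiv 2 \pmod 3} \epsilon_j \pmod 2$, and codimension $0$ otherwise, yielding a single class. The only remaining graph symmetry to address, chain reversal $\epsilon_k \mapsto \epsilon_{m-k}$, preserves the period-$3$ pattern of $V^\perp$ and hence does not merge the two classes when $3 \mid (m-1)$.

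The main obstacle is the careful bookkeeping of how $\epsilon_k$ transforms under the isomorphism, particularly at the boundaries: the end cycles $C_1$ and $C_m$ each contain a single intersection and therefore contribute the distinct generators $e_2$ and $e_{m-2}$, and it is precisely this asymmetry which causes the rank of $V$ to depend on $(m-1) \bmod 3$ rather than on a simpler parameter such as parity. The rewriting rules depicted in Figure~\ref{eq_table} encode these basic moves graphically and can be used as an alternative, combinatorial route through the same rank computation.
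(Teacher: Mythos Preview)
Your argument is correct and takes a genuinely different route from the paper. The paper proceeds combinatorially, via the interaction-graph rewriting rules of Figure~\ref{eq_table}: after canonicalisation it repeatedly applies Equivalences~(1)--(8) to ``push'' negative signs leftward along the chain, reducing every $\xor$-BACC to either the all-positive or the single-negative representative (Point~1), and then, when $3 \nmid (m-1)$, exhibits a second rewrite sequence (alternating Equivalences~(6) and~(8) on the flipped chain) merging the two (Point~2). Distinctness of the two classes when $3 \mid (m-1)$ is argued by asserting exhaustiveness of the rewrite system. Your approach instead encodes canonical BACCs by a sign vector $\epsilon \in \mathbb{F}_2^{m-1}$, identifies the isomorphisms with $\varphi = \mathrm{id}$ as an affine $\mathbb{F}_2$-action, and reads off the number of orbits from the codimension of the image subspace $V$, computed through the tridiagonal recurrence on $V^\perp$. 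What this buys you is a cleaner and more self-contained treatment of the ``exactly two'' direction: the explicit linear invariant $\lambda(\epsilon)=\sum_{j\not\equiv 2 \,(3)} \epsilon_j$ separates the classes without appealing to exhaustiveness of a rewrite system (and, as you note, is preserved by chain reversal). What the paper's approach buys is a reusable graphical calculus (Lemma~\ref{lem_equivalence}) that applies to sub-patterns of arbitrary $\xor$-BANs, not only chains. Your closing remark that the rewriting rules of Figure~\ref{eq_table} encode the same basic moves is exactly right: each of the paper's equivalences corresponds to one of your generators $v^{(j)}$ (or a short combination thereof), so the two proofs are computing the same $\mathbb{F}_2$-rank by different means.
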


As in the case of $\xor$-BAFs, the proof of Proposition~\ref{lem_bacc_classes} 
is done in two steps.
\begin{itemize}
 \item[] \textbf{Point 1.} We first show that the set of $\xor$-BACCs of a 
given type $(n_1,\ldots,n_m)$ is divided into two classes: the positive class 
and the negative class, which respectively corresponds to the 
isomorphism class of the BACC $(n_1,\ldots,n_m)$ where all path are positive, 
and the isomorphism class of the BACC $(n_1,\ldots,n_m)$  
where all paths are positive except the one from $i^1_2$ to $i^1_1$ that is 
negative.
 \item[] \textbf{Point 2.} We then prove that, in fact, when $m-1$ is not a 
multiple of $3$, 
the two classes coincide since the positive BACC and the negative BACC are 
isomorphic in this case.
\end{itemize}

The proof of these two points is based on the equivalences 
presented in Figure~\ref{eq_table}. Each pattern of these equivalences 
describes a subnetwork where every intersection automaton is a 
$\xor$-automaton and every arc represents a signed path of arbitrary 
length (hence containing possibly several automata).
These equivalences have to be understood as follows: given a BAN such that the 
left pattern of an equivalence appears in its interaction graph, then this BAN 
is behaviourally isomorphic to the BAN that has the 
same interaction graph except that the left pattern has been replaced by the 
right pattern of the equivalence, no matter what the outgoing dashed arcs are 
and no matter their number. 
In other words, Figure~\ref{eq_table} presents a set of interaction graph 
rewriting rules that produce equivalent networks according to the 
(behavioural) isomorphism relation.

The following lemma (\ref{lem_equivalence}) says in particular that it is 
enough to prove that the interaction graphs of two BANs can be rewritten one 
into an other using the equivalences from Figure~\ref{eq_table}, to prove that 
the two corresponding BANs are equivalent.

\begin{lem} \label{lem_equivalence}
 The interaction graph rewriting rules depicted in 
Figure~\ref{eq_table} preserve the behavioural isomorphism equivalence.
\end{lem}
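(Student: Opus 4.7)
The plan is to apply Lemma~\ref{lem_bisim_cond}, which reduces the claim to exhibiting, for each rewriting rule in Figure~\ref{eq_table}, an automaton bijection $\varphi$ between the interaction graphs of the left-hand and right-hand BANs together with a family of local Boolean bijections $\{\phi_i\}_{i\in V}$, with $\phi_i\in\{id,neg\}$, satisfying $\phi_i(f_i(x)) = f'_{\varphi(i)}(\phi(x))$ for every configuration $x$. Since the rewriting rules act only on a bounded subpattern of the interaction graph and leave everything else untouched, I would fix $\varphi$ to be the identity on all automata outside the pattern, an appropriate matching bijection on its internal automata, and set $\phi_i = id$ for every $i$ lying outside the pattern \emph{and} for every $i$ reached through a dashed outgoing arc. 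This reduces the verification of Lemma~\ref{lem_bisim_cond} to a purely local check on the automata of the pattern together with a compatibility condition on the outgoing (dashed) arcs.

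For the local verification inside a pattern, the key ingredients are the elementary identities of the xor operator, namely commutativity, associativity, and $a\xor b = \neg a\xor \neg b$, together with the fact that along a nude path the sign of a literal can be shifted onto the next automaton by flipping $\phi$ on it (this is exactly the mechanism behind the canonical-form Theorem~\ref{thm_bisim_can}). For each rule I would walk along the paths of the pattern from one end to the other, recording the required value of $\phi_i$ at every automaton so that the local equation $\phi_i(f_i(x)) = f'_{\varphi(i)}(\phi(x))$ is forced to hold. On a nude segment this is a straightforward propagation: a sign flip at one end either cancels or is transported by the literals. At a $\xor$-intersection automaton, the identity $\neg a \xor \neg b = a \xor b$ allows one to pair up incoming negations, so that the parity of the local literals is preserved by the rewriting.

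The remaining compatibility condition concerns the dashed outgoing arcs of the pattern, which carry the effect of the rewriting out into the rest of the network. For each such arc emanating from an automaton $i$ of the pattern, I would show that the local $\phi_i$ produced by the internal analysis is precisely the identity, so that the sign carried along the outgoing arc is unchanged and the global function of the network is unaffected. When this is not automatically the case, one absorbs the required negation into the first automaton of the outgoing nude path and propagates it further until it dissolves, again using the trick of Theorem~\ref{thm_bisim_can}; because we set $\phi=id$ outside the pattern, the rule must be such that the net sign shipped out on each outgoing arc is $+$, and this is what Figure~\ref{eq_table} is designed to guarantee.

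The main obstacle will be checking the global consistency of the local choices: in patterns containing several xor-intersections simultaneously, the induced sign assignment at one intersection may conflict with the one induced at another, and one has to verify that for every rule the system of local equations admits a solution with $\phi_i\in\{id,neg\}$. I expect this to reduce, rule by rule, to a small linear system over $\mathbb{F}_2$ whose solvability is exactly what makes the rewriting valid — and that is precisely why the rules are drawn as they are in Figure~\ref{eq_table}. Once this case analysis is completed for each of the finitely many rules in the table, composition of bisimulations (which is trivial from Definition~\ref{def_bisim}) extends the equivalence to any finite sequence of rewrites, establishing the lemma.
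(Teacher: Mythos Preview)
Your proposal is correct and follows essentially the same approach as the paper's proof: both invoke Lemma~\ref{lem_bisim_cond}, take $\varphi$ to be the identity, set $\phi_i=id$ outside the pattern and $\phi_i\in\{id,neg\}$ inside it, and verify the local equation $\phi_i(f_i(x))=f'_{\varphi(i)}(\phi(x))$ using the elementary $\xor$ identities. The paper dispatches rules~(1) and~(2) directly via $b_1\xor b_2=\neg{b_1}\xor\neg{b_2}$ and then works out one representative case (Equivalence~(3)) explicitly, declaring the remaining rules analogous; your $\mathbb{F}_2$-linear-system framing of the sign-consistency check is a clean way to organise exactly that computation.
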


\begin{proof}
Equivalences \emph{(1)} and \emph{(2)} only translate the well known 
identities $b_1 \xor b_2 = \neg{b_1} \xor \neg{b_2}$ and 
$\neg{b_1} \xor b_2 =  b_1 \xor \neg{b_2}$ for any Boolean values $b_1$ and 
$b_2$. 

The proofs of the other equivalences are a bit longer but do not present any 
difficulty. We now present a proof for the third equivalence, proofs for 
the other equivalences are similar:

Let $\N=\{f_i\}$ and $\N'=\{f_i'\}$ be two BANs whose interaction graphs only 
differ by the pattern shown in Equivalence~\emph{(3)}. We denote by $\C_1$, 
$\C_2$ the two cycles of the pattern. Similarly, $o_1$ and $o_2$ denote the 
intersection automata and $\C_2^u$ denotes the upper half-cycle of $\C_2$.
We are going to prove that $\N$ and $\N'$ are isomorphic by giving a bijection 
$\varphi:V\to V'$ and a set of local bijections $\{\phi_i:\B \to \B\}_{i\in V}$ 
satisfying the conditions from Lemma~\ref{lem_bisim_cond}.

Let $\varphi$ be the identity over the set of automata  and let $\phi_i = 
neg_\B$ if $i\in \C_1\cup \C^u_2\cup \{o_1\}$ and $\phi_i = id_\B$ 
otherwise. 
We need to check that $\phi_i(f_i(x)) = f'_i(\phi_i(x))$ for all automata 
$i$ in the network. 
This is immediate for all automata that do not belong to $\C_1\cup \C^u_2\cup 
\{o_1,o_2\}$ since for these automata we have used the identity everywhere. 
Now, if $i\in \C_1\cup \C^u_2$, then $\phi_i(f_i(x)) = \phi_i(pred(i)) 
= \neg{pred(i)}=\phi_{pred(i)}(pred(i)) = f'_i(\phi(x))$ and so the identity 
holds. 
Finally it remains to check that the identity holds for Automata $o_1$ and 
$o_2$. This is the case since:  
\begin{enumerate}
 \item $\phi_{o_1}(f_{o_1}(x)) = \phi_{o_1}(\neg{pred_1(o_1)}\xor pred_2(o_1))$
 $= \neg{\neg{pred_1(o_1)}\xor pred_2(o_1)} $
 
$= \neg{\phi_{pred_1(o_1)}({pred_1(o_1))} \xor \phi_{pred_2(o_1)}(pred_2(o_1))} 
= f'_{o_1}(\phi(x))$,
\item[] and
\item $\phi_{o_2}(f_{o_2}(x)) = \phi_{o_2}(\neg{pred_1(o_2)}\xor pred_2(o_2))$
$ = \neg{pred_1(o_2)}\xor pred_2(o_2)$

$=\phi_{pred_1(o_2)}({pred_1(o_2))}\xor \phi_{pred_2(o_2)}(pred_2(o_2)) $
$= f'_{o_2}(\phi(x))$.
\end{enumerate} 
\end{proof}
Using the equivalence of Lemma~\ref{lem_equivalence} we can now finish the 
proof of Proposition~\ref{lem_bacc_classes}. 
As mentioned above, we first show that the interaction graph of any $\xor$-BACC 
can be rewritten into an interaction graph with at most one negative path from 
$i^1_2$ to $o_1 (= i^1_1)$. This proves that there are at most two isomorphism 
classes for a given $\xor$-BACC type, the positive one and the 
negative one.
Then we prove that if $m-1$ is not a multiple of $3$ this negative path  
can actually be removed by an other sequence of rewrites, hence proving that 
the two classes are equal in this case.

\begin{proof}(Point 1.)
As usually we focus on canonical BANs, since this already reduces the number 
of cases to consider.
 Then using Equivalences \emph{(1)} and \emph{(2)} from Figure~\ref{eq_table} we 
rewrite the interaction graph of any of the canonical $\xor$-BACC into 
interaction graphs where the only negative paths are paths from $o_i$ to 
$o_{i+1}$ for $i\in\{1,\ldots,m-2\}$, that is, the only negative paths are ``on 
the top''. 

Then, inductively on the negative path of higher index (the negative path 
from $o_{i}$ to $o_{i+1}$ such that $i$ is maximal), we use the equivalences 
\emph{(5)}, \emph{(6)}, \emph{(7)} and \emph{(8)} from left to right to lower 
this index by at least one after every rewrite. We stop the rewriting when $i = 
0$ or when there are no negative paths left. In other words we do an inductive 
sequence of rewrites on the ``right most'' negative path so as to ``push'' 
this path to the left until reaching the end of the chain or making it 
disappear. An example of such a rewrite sequence is presented in 
Figure~\ref{fig_rewrite_ex}.

By Lemma~\ref{lem_equivalence} the above rewritings prove that any $\xor$-BACC 
is isomorphic to a $\xor$-BACC of same structure with at 
most two negative paths on its first two cycles.
Finally the equivalences \emph{(3)} and \emph{(4)} reduce the four base cases 
($++$,$+-$,$-+$,$--$) obtained this way to two: the positive case ($++$) and 
the negative case ($-+$).
\end{proof}

\begin{proof}(Point 2.)
 We now consider the interaction graph of a negative $\xor$-BACC of length 
$m$. By Equivalence \emph{(2)}, this network is isomorphic to a $\xor$-BACC of 
same structure with only one negative path on the first or on the second bottom 
half-cycle. Then, viewing the BACC upside-down, we can reuse the equivalences 
\emph{(6)} and \emph{(8)} alternatively so as to push this negative path to the 
right. Every time we apply the equivalences \emph{(6)} and \emph{(8)} 
successively the negative path is pushed 3 half-cycles to the right. Finally 
Equivalence \emph{(4)} tells us that if the negative path is pushed to the 
second last bottom half-cycle then the $\xor$-BACC is in the positive class. 
This can only happen if $m-1\equiv 1 \mod(3)$  or if $m-2\equiv 1 \mod(3)$, 
depending on if we start from the first or from the second bottom half-cycle 
respectively. In other words, this is the case if $m-1$ is not a multiple of 
$3$.

Moreover, the equivalences presented in Figure~\ref{eq_table} are 
exhaustive, \ie any other equivalences involving $\xor$-chains can be deduced 
from these eight equivalences. So, the argument above also proves that a 
positive $\xor$-BACC and a negative $\xor$-BACC cannot be isomorphic unless 
$m-1$ is a multiple of $3$. 
In other words, if $m-1\equiv 0\mod(3)$ there are always two isomorphism 
classes, the positive one and the negative one.
\end{proof}

 \begin{figure}
 \center
 \begin{tikzpicture}[->,>=latex,shorten >=1pt,node distance=1.5cm,minimum width = .8cm]
  \tikzstyle{vertex}=[draw = black,circle,text=black]
  \pgfmathsetmacro{\n}{6}
  \node[vertex] (1) {$\xor$};
  \path (1) edge [loop left] node[circle] {+} (1);

  \foreach \i in {2,...,\n}{
    \pgfmathtruncatemacro{\prec}{\i-1}
    \node[vertex] (\i) [right of = \prec] {$\xor$};
    \path   (\i) edge [bend left, near end] node[below,circle] {+} (\prec);
  }
 \path (\n) edge [loop right] node[circle] {--} (\n);

 \path (1) edge [bend left, near end] node[above, circle] {--} (2);
 \path (2) edge [bend left, near end] node[above, circle] {+} (3);
 \path (3) edge [bend left, near end] node[above, circle] {--} (4);
 \path (4) edge [bend left, near end] node[above, circle] {--} (5);
 \path (5) edge [bend left, near end] node[above, circle] {--} (6);
 \end{tikzpicture}
 
 $ \qquad \equiv \quad  (1)$
 
 \begin{tikzpicture}[->,>=latex,shorten >=1pt,auto,node distance=1.5cm,minimum width = .8cm]
  \tikzstyle{vertex}=[draw = black,circle,text=black]
  \pgfmathsetmacro{\n}{6}
  \node[vertex] (1) {$\xor$};
  \path (1) edge [loop left] node[circle] {+} (1);

  \foreach \i in {2,...,\n}{
    \pgfmathtruncatemacro{\prec}{\i-1}
    \node[vertex] (\i) [right of = \prec] {$\xor$};
    \path   (\i) edge [bend left, near end] node[circle] {+} (\prec);
  }
  
 \path (1) edge [bend left, near end] node[circle] {--} (2);
 \path (2) edge [bend left, near end] node[circle] {+} (3);
 \path (3) edge [bend left, near end] node[circle] {--} (4);
 \path (4) edge [bend left, near end] node[circle] {--} (5);
 
 \path (\n) edge [loop right] node[circle] {+} (\n);
 \path (5) edge [bend left, near end] node[circle] {+} (6);
 \end{tikzpicture}
  
 $ \qquad \equiv  \quad (8)$
 
 \begin{tikzpicture}[->,>=latex,shorten >=1pt,auto,node distance=1.5cm,minimum width = .8cm]
  \tikzstyle{vertex}=[draw = black,circle,text=black]
  \pgfmathsetmacro{\n}{6}
  \node[vertex] (1) {$\xor$};
  \path (1) edge [loop left] node[circle] {+} (1);

  \foreach \i in {2,...,\n}{
    \pgfmathtruncatemacro{\prec}{\i-1}
    \node[vertex] (\i) [right of = \prec] {$\xor$};
    \path   (\i) edge [bend left, near end] node[circle] {+} (\prec);
  }
 \path (\n) edge [loop right] node[circle] {+} (\n);

 \path (1) edge [bend left, near end] node[circle] {--} (2);
 \path (5) edge [bend left, near end] node[circle] {+} (6);

 \path (2) edge [bend left, near end] node[circle] {--} (3);
 \path (3) edge [bend left, near end] node[circle] {+} (4);
 \path (4) edge [bend left, near end] node[circle] {+} (5);

 \end{tikzpicture}
 
 $ \qquad  \equiv \quad (8)$
 
 \begin{tikzpicture}[->,>=latex,shorten >=1pt,auto,node distance=1.5cm]
  \tikzstyle{vertex}=[draw = black,circle,text=black]
  \pgfmathsetmacro{\n}{6}
  \node[vertex] (1) {$\xor$};
  \path (1) edge [loop left] node {$-$} (1);

  \foreach \i in {2,...,\n}{
  \pgfmathtruncatemacro{\prec}{\i-1}
  \node[vertex] (\i) [right of = \prec] {$\xor$};
  \path (\prec) edge [bend left, near end] node {$+$} (\i)
        (\i) edge [bend left, near end] node {$+$} (\prec);
}
 \path (\n) edge [loop right] node {+} (\n);
 \end{tikzpicture}
 \caption{Example of the rewrite of a cycle chain of type (1,2,2,2,2,2,1) into a negative cycle chain of type 
(1,2,2,2,2,2,1)}
 \label{fig_rewrite_ex}
 \end{figure}
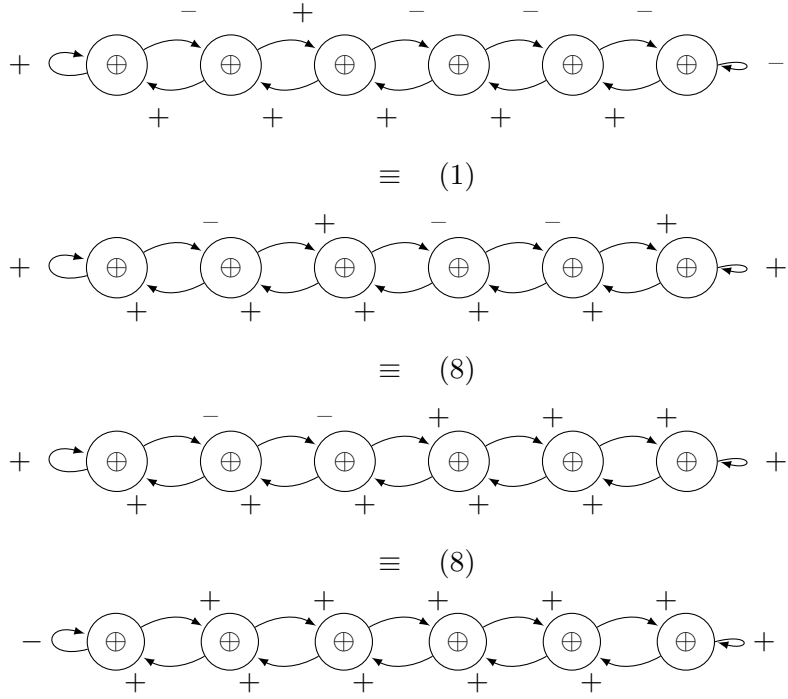

\subsection*{\textbf{ATG}}

For every type of $\xor$-BACCs, we now study the number of fixed points of each 
of their behavioural isomorphism classes so as to precise the general picture of 
their ATG given by Theorem~\ref{thm_reachability}.

\begin{prop}\label{lem_bacc_fp}
 A positive $\xor$-BACC of length $m$ and size $n$ has a unique fixed point, 
$0^n$, if $(m-1) \not\equiv 0 \mod (3)$ and has two fixed points, $0^n$ and 
$(101)^{\frac{m-1}{3}}$, if $(m-1)\equiv 0 \mod (3)$.
 A negative $\xor$-BACC (of length $m\equiv 1 \mod (3)$) has no fixed point.
\end{prop}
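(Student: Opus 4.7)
The plan is to reduce the problem to a small XOR linear system over the values at the intersection automata, namely $a_k := x_{o_k}$ for $k = 1,\ldots,m-1$, and then to solve that system in each canonical form.

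First I will argue that a fixed point is completely determined by $(a_1,\ldots,a_{m-1})$. Every non-intersection automaton of a $\xor$-BACC lies on a nude path whose head is an intersection $o_k$ (or, in the end cycles $\C_1$ and $\C_m$, on a nude path that wraps back to $o_1$ or $o_{m-1}$ itself); in a fixed point its state equals the head's state twisted by the sign of its nude path, and by Proposition~\ref{lem_bacc_classes} we may work in the canonical positive (resp.\ canonical negative) representative, so all these signs are known. Thus the only constraints left are the stability conditions $f_{o_k}(x) = a_k$.

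Next I will derive those equations explicitly. For each $o_k$, $f_{o_k}(x) = x_{pred_{\C_k}(o_k)} \xor x_{pred_{\C_{k+1}}(o_k)}$. In the positive case each predecessor sits at the end of a maximal positive nude path whose source is the ``other'' intersection on its cycle (that is $o_{k-1}$ in $\C_k$ for $k \geq 2$, and $o_{k+1}$ in $\C_{k+1}$ for $k \leq m-2$) or, for the end cycles, wraps around back to $o_1$ or $o_{m-1}$ itself. Writing out the three regimes, the stability equations become $a_2 = 0$ at $o_1$ (or $a_1 = 0$ in the degenerate case $m = 2$), $a_{k+1} = a_{k-1} \xor a_k$ for $2 \leq k \leq m-2$, and $a_{m-2} = 0$ at $o_{m-1}$. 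I will then solve the recurrence starting from $a_1 = b$, $a_2 = 0$. A direct unfolding gives the period-$3$ sequence $(b,0,b,b,0,b,b,0,b,\ldots)$, so $a_k = 0$ iff $k \equiv 2 \pmod{3}$ and $a_k = b$ otherwise. The boundary $a_{m-2} = 0$ is automatic exactly when $m-2 \equiv 2 \pmod{3}$, that is when $m-1 \equiv 0 \pmod{3}$; both $b = 0$ and $b = 1$ then yield valid fixed points, namely $0^n$ and the configuration obtained by lifting the period-$3$ pattern along the positive nude paths, which is the $(101)^{(m-1)/3}$ announced in the statement. Otherwise the boundary forces $b = 0$, leaving only $0^n$.

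Finally, for the canonical negative BACC the single negative path along $\C_1$ toggles only the predecessor of $o_1$ coming from $\C_1$, and so only the equation at $o_1$ changes, becoming $\neg a_1 \xor a_2 = a_1$, i.e.\ $a_2 = 1$. The same recurrence now produces the period-$3$ sequence with $a_k = 1$ whenever $k \equiv 2 \pmod{3}$. The hypothesis $m \equiv 1 \pmod{3}$ gives $m-2 \equiv 2 \pmod{3}$, hence $a_{m-2} = 1$, which contradicts the boundary $a_{m-2} = 0$ at $o_{m-1}$; so no fixed point can exist. The main obstacle I expect is the careful bookkeeping of the wrap-around nude paths in the two end cycles and the verification that, in the canonical negative form, only the $o_1$-equation is actually flipped; once this is set up cleanly the rest is a short computation on a linear recurrence over $\B$.
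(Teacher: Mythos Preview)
Your proof is correct and follows the same strategy as the paper: reduce to the states $a_k=x_{o_k}$ of the intersection automata and propagate the stability constraints along the chain, exploiting the period-$3$ behaviour of the resulting sequence. You package the propagation as the linear recurrence $a_{k+1}=a_{k-1}\xor a_k$ with the two boundary equations $a_2=0$ and $a_{m-2}=0$, whereas the paper carries out the same computation by a direct case split on $x_{o_1}$; the content is identical.
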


\begin{proof}
 In a stable configuration all the nodes of a given nude path have the same 
state, hence from now on we focus on determining the states of the intersection 
automata $o_k$.
 As this is done in Section~\ref{section_ex_BAF} for $\xor$-BAF, we determined 
the fixed points of a 
positive $\xor$-BACC by fixing the state of one of its automata and propagating 
the information induced until having to make a new choice or reaching a fixed 
point or a contradiction.
Here, we start by fixing Automaton $o_1$ (\ie the ``left most'' automaton) and 
by induction on the two possible cases ($x_{o_1} = 0$ and $x_{o_1} = 1$) we 
show that this completely determines the state of the other automata if $x$ is 
a fixed point.
\begin{enumerate}
 \item  if $x_{o_1} = 0$, then  $o_1$ is stable if and only if $x_{o_2} = 0$ 
and, recursively, for all $1<k\leq m-2$ , if 
$x_{o_{k-1}}=0$ and $x_{o_k} = 0$ then $o_k$ is stable if and only if 
$x_{o_{k+1}}=0$. Hence $0^m$ is the unique fixed point such that $x_0 = 0$.
 \item Similarly, if $x_{o_1} = 1$ then $o_1$ is stable if and only if $x_{o_2} 
= 0$. Then, we have three induction cases for all $1<k\leq m-2$:
 \emph{(1)} if $x_{o_{k-1}}=1$ and $x_{o_k} = 0$ then $o_k$ is stable if and 
only if $x_{o_{k+1}}=1$ ;
 \emph{(2)} if $x_{o_{k-1}}=0$ and $x_{o_k} = 1$ then $o_k$ is stable if and 
only if  $x_{o_{k+1}}=1$; 
 \emph{(3)} if $x_{o_{k-1}}=1$ and $x_{o_k} = 1$ then $o_k$ is stable if and 
only if  $x_{o_{k+1}}=0$. Hence 
the only way for the last intersection automaton, $o_{m-1}$, to be stable when 
$x_{o_1} = 1$ is that $(m-1)\equiv 0 (\mod 3)$, 
and the corresponding configuration is $(101)^{(m-1)/3}$. 
\end{enumerate}
This concludes the proof of the first statement. 

To show the second statement one only needs to realise that having a stable 
configuration for a negative $\xor$-BACC of length $m\equiv 1 \mod(3)$ amounts 
to having a stable configuration starting with a $1$ for a $\xor$-BACC of size 
$m-1$, which is impossible from the proof above.
 Indeed, if $x_{o_1} = 0$ then Automaton $o_1$ cannot be stable no matter 
the state value of Automaton $o_2$ in the configuration.
 Hence, if $x$ is a stable configuration $x_{o_1}$ must be $1$. This forces 
$x_{o_2}$ to be $1$ too (otherwise Automaton $o_1$ is not stable). 
 So, if $x$ is stable then $x_{o_2}\ldots x_{o_m}$ is a stable configuration 
starting with a $1$ for a positive $\xor$-BACC of size $m-1$. This is a 
contradiction. So there are no stable configurations for the negative 
$\xor$-BACC of length $m\equiv 1 \mod(3)$.
\end{proof}

According to Proposition~\ref{lem_bacc_classes}, if $\N$ is a $\xor$-BACC of 
length $m$ and size $n$ such that $m-1 \neq 0 \mod(3)$, then there is only one 
behavioural isomorphism class and so, similarly to what 
we have done for $\xor$-BAFs, it is possible to characterise completely the ATG 
of $\N$ using Proposition~\ref{lem_bacc_fp}: $G_\N^A$ has exactly one 
unreachable configuration, one fixed point, and one SCC of $2^{n}-2$ transient 
configurations.

The case where $m-1$ is a multiple of 3 is more complex because there are no 
easy ways to tell whether a network belongs to the positive or the negative 
class of its type, other than to compute its reduction graph as this is done in 
the proof of Proposition~\ref{lem_bacc_classes}. Moreover, the class of the 
reverse network also depends on the length of each half-cycle in the 
$\xor$-BACC, so describing each possible case would be tedious.
%\erase{
%(for middle 
%intersection points $i$, if the sum of the size of the two incoming circle arcs is even then $\F^R_i$ will be of the 
%form $\lnot(\_)\xor\_$ in the 
%canonical form of $\N^R$ and of the form $\_\xor\_$ otherwise ; it is the converse for the local formula associated to 
%the endpoints).
%}
However, summarising the results above, we can still state that there is at 
most two fixed points and two unreachable configurations in the transition graph 
of a $\xor$-BACC of length $m-1 \equiv 0 \mod(3)$, or, to be more precise we 
can say that this transition graph has one of these four forms:
\begin{itemize}
 \item a SCC of size $2^n-4$, two fixed points and two unstable configurations (case $\N$ and  
$\N^R$ are from the positive class);
 \item a SCC of size $2^n-2$ and two fixed points  (case $\N$ is positive and $\N^R$ is negative);
 \item a SCC of size $2^n-2$ and two unreachable configurations  (case $\N$ is negative and $\N^R$ is positive);
 \item a SCC of size $2^n$ (case both $\N$ and $\N^R$ are negative).
\end{itemize}

% \erase{
% Notice that computing the equivalence class of a given $\xor$-BACC as this is done in the proof of Lemma 
% \ref{lem_bacc_classes} also provides the corresponding 
% bijection $\phi$ between the set of configurations. Hence it is a way to compute the fixed points of this network in 
% linear time (in the length $m$ of the 
% $\xor$-BACC) since they are equal to $\phi^{-1}(F)$ where $F$ is the set of fixed points of the corresponding class, 
% given in Lemmas \ref{lem_bacc_fp} and 
% \ref{cc_neg}. \tocheck{This is faster than solving the linear system associated the network.}
% \erase{
% To conclude on $\xor$-BACC of length $m$, an extra bit of work give the following result:
% \begin{itemize}
%  \item if at least $\C_1$ or $\C_{m}$ is of size $0$ (\ie is a loop) then $\N^{R}$ is still a positive $\xor$-BACC
%  \item otherwise $\N^{R}$ is a negative $\xor$-BACC
% \end{itemize}
% Hence 
% \begin{itemize}
%  \item if at least $\C_1$ or $\C_{m}$ is of size $0$ then $\N$ has exactly two fixed points and two unreachable 
% configurations.
%  \item otherwise $\N$  
% \end{itemize}
% }
% }

% % % % % % % % % % % % % % % % % % % % % % % % % % % % % %

\section{Interpretations and perspectives}

% This paper contributes to the growing comprehension of non-monotonic Boolean automata networks. 
% \red{
% Theorem~\ref{thm_reachability} and Section~\ref{section_ex} suggest for example that non local monotony brings 
% both entropy and stability to BANs since the high expressiveness of the resulting networks helps them to converge to 
% fix points instead of getting stuck into larger attractors. Beside
% }
% The notion of \red{(behavioural) isomorphism \sout{bisimulation}} reveals to be a powerful proof factoring tool for the 
% comprehension of the dynamical  
% properties \red{of BANs.}
% 
% Through general results and their application to particular classes of interaction graphs, 
% the present work 
% launches the description of asymptotic dynamical behaviors of $\xor$-BANs under the asynchronous update mode. It should 
% now be enriched with the study of larger classes of interaction graphs, as well as the study and comparison of 
% asymptotic behaviors under different update modes.
% 

Through general results and their application to particular classes of interaction graphs, the present work launches 
the description of asymptotic dynamical behaviours of $\xor$-BANs under the asynchronous update mode.
By this means, it contributes to improve our understanding of the wild domain of 
non-monotonic Boolean automata networks. 
Theorem~\ref{thm_reachability} and Section~\ref{section_ex} suggest for 
example that non local monotony brings 
both entropy and stability to BANs since the high expressiveness of the resulting networks helps them to converge to 
fix points instead of getting stuck into larger attractors.
In the context of cellular reprogramming, the small number of attractors 
in $\xor$-BANs as well as the small number of irreversible configurations 
suggest that the genes involved in a $\xor$-cluster won't be good 
candidates for being reprogramming determinants~\cite{cpjs13}. Hence this might 
help to reduce the number of genes to consider.

The notion of behavioural isomorphism also reveals to be a powerful tool for factorising proofs when it comes to the 
study of a particular family of BANs. Even if finding a proper set of interaction graph rewritings may be a bit 
challenging, it results in a very interesting and comprehensive tool that highlights which characteristics of the 
interaction graphs really matter in the dynamical behaviours of the BANs.

 We believe that most of the results obtained could be refined or extended 
to some other types of ($\xor$)-BANs. For example it should be possible to 
allow some arcs between or inside the cycles of a $\xor$-BADC without changing 
the general shape of its corresponding ATG. These kinds of refinements draw a 
logical line for further works. 

Another interesting question would be directed to the study and comparison of asymptotic 
behaviours under different update modes. From this perspective, 
the algorithms we describe and the ATG we get for strongly connected 
$\xor$-BANs with an induced BADC of size greater than $3$ suggest that the addition of $k$-synchronism, that is when 
one allows $k$ automata to update simultaneously, make the set of unreachable configuration disappear if $k$ is 
greater than the size of the smallest cycle in an induced BADC of the network.

% % % % % % % % % % % % % % % % % % % % % % % % % % % % % %

\bibliographystyle{plain}
\bibliography{flora_xorban_v0.bib}

\erase{
% % % % % % % % % % % % % % % % % % % % % % % % % % % % % %
\section{Appendix}

% % % % % % % % % % % % % % % % % % % % % % % % % % % % % %

\begin{tikzpicture}
 \node[circle, minimum width = 100pt] (o) {};
 \draw[] (o.center) -- (o.north) node[midway] {...};
 \draw[] (o.center) -- (o.south) node[midway]  {...};
 %\path[bend left, -] (o.center) edge (o.60) edge (o.120) edge (o.-120) edge (o.-60);
 \draw[bend right,->] (o.center) to (o.60) to (o.center);
 \draw[bend right,->] (o.center) to (o.120) to (o.center);
 \draw[bend right,->] (o.center) to (o.-60) to (o.center);
 \draw[bend right,->] (o.center) to (o.-120) to (o.center);
 \draw[->] (o.180) to (o.center) ;
 \draw[->] (o.center) to (o.0);
 
\end{tikzpicture}
}

\end{document}